\def\input@path{{local-chicago/}}\makeatother
\appto{\bibsetup}{\raggedright\setlength{\bibhang}{2em}} 
\theoremstyle{plain}
\newtheorem{theorem}{Theorem}[section]
\newtheorem{proposition}[theorem]{Proposition}
\newtheorem{assumption}{Assumption}[section]
\newtheorem{lemma}{Lemma}
\theoremstyle{remark}
\newtheorem{remark}{Remark}[section]
\title{Denoised IPW-Lasso for Heterogeneous Treatment Effect Estimation in Randomized Experiments}
\author[1]{Mingqian Guan\thanks{mingqianguan@gmail.com}}
\author[2]{Komei Fujita}
\author[1]{Naoya Sueishi\thanks{sueishi@econ.kobe-u.ac.jp}}
\author[2]{Shota Yasui}
\affil[1]{Kobe University}
\affil[2]{CyberAgent}
\date{}
\begin{document}

\maketitle

\begin{abstract}
This paper proposes a new method for estimating conditional average treatment effects (CATE) in randomized experiments. We adopt inverse probability weighting (IPW) for identification; however, IPW-transformed outcomes are known to be noisy, even when true
propensity scores are used. To address this issue, we introduce a noise reduction procedure and estimate a linear CATE model using Lasso, achieving both accuracy and interpretability. We theoretically show that denoising reduces the prediction error of the Lasso. The method is particularly effective when treatment effects are small relative to the variability of outcomes, which is often the case in empirical applications. Applications to the Get-Out-the-Vote dataset and Criteo Uplift Modeling dataset demonstrate that our method outperforms fully nonparametric machine learning methods in identifying individuals with
higher treatment effects. Moreover, our method uncovers informative heterogeneity patterns that are consistent with previous empirical findings.

    \textbf{Keywords}: Causal Inference; Conditional Average Treatment Effect; Uplift Modeling
\end{abstract}

\section{Introduction}
\label{sec:introduction}

Understanding heterogeneous treatment effects is a central concern in causal inference, as it provides critical insights into effective policy design and the underlying mechanisms driving treatment responses. Although average treatment effects (ATE) provide an overall summary of treatment efficacy, they fail to capture important variation between individuals. In many real-world applications, such as policy targeting or personalized marketing, identifying those who benefit most (or are harmed) by a treatment can improve resource allocation and help to avoid unintended consequences.

Recent advances in the analysis of heterogeneous treatment effects have occurred in the development of machine learning-based methods for estimating conditional average treatment effects (CATE). A prominent approach is the meta-learning framework (e.g., \cite{kunzel2019metalearners}), which leverages a wide range of supervised learning methods for CATE estimation, including the X-learner \parencite{kunzel2019metalearners}, R-learner \parencite{nie2021quasi}, and DR-learner \parencite{kennedy2023towards}.
Another line of research tailors machine learning methods for causal inference tasks.
These include tree-based methods (e.g., the causal forest of \cite{wager2018causalforest}) and deep learning methods (e.g., \cite{shalit2017estimating}; \cite{shi2019adapting}).
These advances have primarily focused on observational settings, with less emphasis on randomized experiments.

The present study proposes a novel estimation method for CATE using randomized controlled trial (RCT) data to address two core concerns in CATE estimation: accuracy and interpretability. Accurate estimation of CATE enables precise targeting of treatments, which reduces the risk of misguided decisions. Interpretability is equally important, as it ensures transparency in decision-making and enables policymakers to understand the mechanisms underlying treatment effect heterogeneity.
However, contemporary machine learning approaches often produce black-box models with limited interpretability, posing risks in decision-making contexts.

The proposed estimation method, referred to as the denoised IPW (DIPW)-Lasso, builds on the inverse probability weighting (IPW) transformation. In experimental settings where propensity scores are known, the IPW transformation simplifies CATE estimation to a standard regression problem. Assuming that CATE is a linear function of covariates, we estimate it using the Lasso \parencite{tibshirani1996regression}. A major drawback of the IPW transformation, however, is that it introduces substantial noise, even when the true propensity scores are used. To address this issue, we propose a noise reduction procedure that preserves the advantages of IPW while significantly improving estimation accuracy. The procedure simply adds terms orthogonal to the covariates to the regression. We show that the augmented IPW (AIPW) transformation arises as a special case of our noise reduction method.

The inefficiency of IPW-type estimators has been recognized in the literature, particularly in the context of ATE estimation (e.g., \cite{hirano2003efficient}). However, we observe that this inefficiency becomes more pronounced when treatment effects are small relative to the variability of outcomes.
This situation is common in practice. For instance, \textcite{fryer2017production} conducted a meta-analysis of 105 school-based RCTs and found that the average effects of interventions on students' math and reading achievement were approximately \(0.05\sigma\) and \(0.07\sigma\), respectively.
Similar evidence has been documented in other domains, including medicine (e.g., \cite{susukida2025application}) and marketing (e.g., \cite{korkames2025meta}).

The DIPW-Lasso is particularly effective in settings where treatment effects are modest, providing both high estimation accuracy and interpretability. This advantage stems from two key components: the noise reduction procedure and the structural assumption on CATE. The former improves the stability of IPW-based estimation, while the latter contributes to both precision and interpretability. Recent empirical findings support the utility of such structural assumptions in noisy settings. In their analysis of the effects of nudges, \textcite{athey2025machine} demonstrated that a model incorporating structural assumptions outperformed a fully nonparametric one in terms of targeting. 

We theoretically show that the DIPW-Lasso improves upon its non-denoised counterpart in terms of prediction error. 
An important aspect of the result is that, although our noise reduction method requires estimating a nuisance function, such as the outcome regression function, it relies on only weak assumptions regarding the properties of the estimators.
Simulation results indicate that denoising has a substantial effect when treatment effects are small.
In our setting, the noise reduction method reduced the root mean squared error (RMSE) to approximately one-third to one-quarter of that without noise reduction.
Furthermore, our method demonstrated superior performance compared to recently proposed CATE estimation methods: the T-learner, X-learner, R-learner, DR-learner, and causal forest.

Our empirical results provide further support for the effectiveness of the proposed method.
We analyzed two real-world datasets: the Get-Out-the-Vote (GOTV) dataset \parencite{gerber2008social} and the Criteo Uplift Modeling dataset  \parencite{diemert2018large}.
Since true treatment effects are unobservable in real data, we evaluated the performance of estimation methods using the uplift curve, a commonly used tool in fields such as marketing and personalized medicine for assessing whether individuals with high treatment effects are properly identified (see, e.g., \cite{gutierrez2017causal}).
We found that the methods that impose linearity on CATE exhibited better targeting performance than nonparametric methods, such as the causal forest, that make no functional form assumptions about the CATE function.
Moreover, we empirically confirmed that our method substantially reduced the noise even when the outcomes are binary.
These results demonstrate the combined benefit of noise reduction and imposing a linear structure on CATE.

A limitation of our method is that it requires knowledge of the propensity score. Therefore, it cannot be applied to observational data. However, although RCTs are considered the gold standard for causal inference, there is relatively little research on CATE estimation methods using experimental data.
Consequently, there is currently no established guideline or consensus on which method should be used for estimating the CATE with experimental data.
Our proposed method provides a valuable option for estimating CATE using experimental data.

\subsection{Related Literature}

While research on CATE estimation using RCT data is limited, a few studies are particularly relevant to ours. \textcite{imai2013estimating} framed CATE estimation as a variable selection problem under the assumption that CATE is a linear and sparse function of covariates.
A key limitation of their approach is that it requires a correct specification of the baseline outcome, whereas our method does not.

Our approach is also closely related to that of \textcite{tian2014simple}.
They imposed linearity and sparsity assumptions on CATE and proposed a method to improve the efficiency of pseudo-outcome regression.
A central distinction lies in the strategy for variance reduction: our method performs denoising directly on the IPW-transformed outcome, whereas theirs focuses on reducing the variance of the OLS estimator of the linear CATE model by applying denoising to the corresponding score function.
Consequently, their efficiency augmentation procedure depends not only on the specification of CATE, but also on the particular estimation method used.
Theoretical validity under alternative estimators such as the Lasso has not been established.

The present study addresses a relatively underexplored aspect of CATE estimation: the application of variance reduction techniques to improve estimation precision.
Variance reduction is a well-established strategy in the context of ATE estimation under RCTs (e.g., \cite{freedman2008regression, lin2013agnostic}) and is widely recognized to significantly improve efficiency.
Recent developments have integrated regression adjustment with machine learning methods to further advance variance reduction (e.g., \cite{wager2016high}; \cite{mlvariancereductionOnlineExperiment}).
Our findings highlight that variance reduction significantly improves the precision of CATE estimation, particularly when treatment effects are small, extending its known benefits beyond ATE estimation.

A key motivation of this study is to develop a method for supporting decision-making using interpretable models. As machine learning is increasingly applied in high-stakes areas such as policy and medicine, post hoc explanation methods such as LIME \parencite{ribeiro2016should} and SHAP \parencite{lundberg2017unified} have been developed to provide interpretability for black-box models.
However, recent studies have raised concerns about the limitations and potential risks of such techniques \parencite{rudin2019stop, slack2020fooling}. The DIPW-Lasso estimates a structurally simple model for CATE, offering direct interpretability in line with arguments favoring inherently interpretable models over post hoc explanations.

\subsection{Plan of Paper}
The remainder of the paper is organized as follows. Section \ref{sec:methodology} introduces the setup and presents our denoising method, along with its connection to the AIPW transformation. Section \ref{sec:results} establishes the theoretical properties of the DIPW-Lasso, showing that it achieves a tighter error bound than its non-denoised counterpart, except on a set of asymptotically negligible probability. Section \ref{sec: simulation} reports results from a Monte Carlo study conducted in a setting with small treatment effects. Section \ref{sec:GOTV} presents empirical results based on two publicly available RCT datasets.
A conclusion is presented in Section \ref{sec:conclusion}. The technical proofs are presented in the Appendix.

\section{Methodology}
\label{sec:methodology}
This section presents our setting, an identification strategy, and an estimation method for CATE.

\subsection{Setup and Notation}
Let $Y_i$ represent the response outcome, $T_i\in \{0,1\}$ denote the binary treatment indicator, and $X_i=(X_{i1}, \dots, X_{ip})'$ be a $p$-dimensional vector of covariates, where $p$ can be larger than the sample size. 
We observe a random sample $\{Y_i,T_i,X_i\}^n_{i=1}$.

We formalize the causal problem within Rubin's potential outcome framework. 
Let $Y_i(1)$ and $Y_i(0)$ denote the potential outcomes when $T_i=1$ and $T_i=0$, respectively.
The CATE function is the key causal parameter that characterizes the heterogeneous treatment effects and is defined as
\begin{equation} 
\tau(x)=E[Y_i(1)-Y_i(0) \mid X_i=x]. 
\end{equation}

We assume that $\tau(x) = x'\beta$ for some $p$-dimensional potentially sparse vector $\beta$, while imposing no explicit assumptions on the nuisance functions $\mu_1(x) \equiv E[Y_i(1)|X_i=x]$ and $\mu_0(x) \equiv E[Y_i(0)|X_i =x]$.  
This assumption reflects our view that CATE is a relatively simple function, with only a small subset of covariates driving heterogeneity, whereas potential outcomes can depend on the covariates in a complex manner.
Some previous studies have indicated that the smoothness of $\tau(x)$ often differs substantially from that of $\mu_1(x)$ and $\mu_0(x)$  (see, e.g., \cite{kennedy2023towards}).

We impose the following assumptions for the identification of CATE:
\begin{assumption}\label{assum:unconf}
    The treatment is randomly assigned conditional on $X_i$:
        $$T_i \perp \!\!\! \perp (Y_i(1),Y_i(0)) \mid X_i.$$
\end{assumption}

\begin{assumption}\label{assum:overlap}
    The propensity score function $p(x)\equiv p(T_i=1 \mid X_i=x)$ is known and there exists $\xi > 0$ such that $\xi < p(x) < 1 - \xi$ for all $x$ in the support of $X_i$.
\end{assumption}

\subsection{IPW Transformation}

Under Assumptions \ref{assum:unconf} and~\ref{assum:overlap}, CATE can be identified as the conditional expectation of the IPW-transformed outcome:
\begin{equation}
   \tau(x) = E[Y_iW_i \mid X_i=x], \quad \text{where } W_i=\frac{T_i-p(X_i)}{p(X_i)(1-p(X_i))}.
\end{equation}
With a known propensity score, the IPW transformation generates an unbiased signal of CATE. 
This allows us to reformulate the CATE estimation problem as a simple regression task.
Specifically, one can estimate CATE by regressing the transformed outcome $Y_iW_i$ on the covariates $X_i$, which is often referred to as pseudo-outcome regression.
This method is suitable for our setting because we can directly impose a functional form on $\tau(x)$ without imposing any conditions on $\mu_1(x)$ and $\mu_0(x)$.
Under the linearity assumption on $\tau(x)$, we obtain the following linear regression model:
\begin{equation}\label{eq:non-denoised}
Y_i W_i = X_i'\beta + e_i, \quad E[e_i \mid X_i] =0,
\end{equation}
which can be estimated using the Lasso.

IPW and its related methods have been widely used for estimating ATE, and in recent years, they have also been employed for estimating CATE (\cite{curth2021nonparametric}; \cite{kennedy2023towards}).
However, IPW pseudo-outcome regression has been reported to perform poorly in some simulation settings (for example, \cite{knaus2021MonteCarloHTE}).
Further, previous observational studies have revealed that the instability of propensity score estimation can have undesirable effects.

The problem we observe is that the IPW transformation introduces significant noise even when the propensity score is known.
To illustrate this intuitively, we decompose the IPW-transformed outcome as follows:
\begin{equation} \label{IPWdecom1}
Y_i W_i = (Y_i(1) - Y_i(0)) T_i W_i + Y_i(0) W_i.
\end{equation}
Here, we have $E[Y_i(0) W_i|X_i] = 0$, indicating that the second term is pure noise. 
In practical applications, the baseline effect $Y_i(0)$ often exhibits greater variability than the causal effect $(Y_i(1) - Y_i(0))$, leading to a low signal-to-noise ratio. 
This issue undermines the accuracy of IPW pseudo-outcome regression.

\subsection{Denoising Method}

We introduce a method to reduce the noise in the IPW-transformed outcome. We begin by considering the following linear projection model, which augments \eqref{eq:non-denoised} with additional terms:
\begin{equation} \label{eq:denoised}
Y_i W_i = X_i'\beta + \alpha_1 W_i + \alpha_2 B(X_i) W_i + u_i,
\end{equation}
where $B(x)$ is a function of $X_i$, and $\alpha_1$ and $\alpha_2$ are linear projection coefficients.
Including the terms $W_i$ and $B(X_i)W_i$ does not affect the identification of CATE, because
\begin{equation} \label{orthogonality}
E \left[ B(X_i) W_i \tau(X_i) \right] = 0
\end{equation}
for any $B(x)$.
The orthogonality implies that the error terms in \eqref{eq:non-denoised} and \eqref{eq:denoised} are related as
\[
e_i = \alpha_1 W_i + \alpha_2 B(X_i) W_i + u_i,
\]
which shows that $E[u_i^2] \leq E[e_i^2]$.

Although the choice of $B(x)$ is unrestricted, the optimal function that minimize the variance of $u_i$ can be obtained by solving
\begin{equation}\label{eq:optimalB}
\min_{B(\cdot)} E[(Y_i W_i - B(X_i) W_i)^2].
\end{equation}
We prove the following proposition in the Appendix.
\begin{proposition}\label{proposition:optimalB}
The optimal function $B^*(x)$ that solves \eqref{eq:optimalB} is
$$B^*(x) = (1 - p(x)) \mu_1(x)+ p(x) \mu_0(x).$$
\end{proposition}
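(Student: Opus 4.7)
The plan is to factor the objective in \eqref{eq:optimalB} as
\[
E[(Y_i W_i - B(X_i) W_i)^2] = E\bigl[W_i^2 (Y_i - B(X_i))^2\bigr],
\]
and then minimize pointwise in $x$ using the tower property. Since the decision variable depends only on $X_i$, the problem reduces, for each $x$ in the support of $X_i$, to minimizing the scalar quantity $g(b;x) \equiv E\bigl[W_i^2 (Y_i - b)^2 \mid X_i = x\bigr]$ over $b \in \mathbb{R}$.

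Next I would evaluate $g(b;x)$ explicitly by conditioning on $T_i$ and using Assumptions \ref{assum:unconf} and \ref{assum:overlap}. When $T_i = 1$, $Y_i = Y_i(1)$ and $W_i = 1/p(x)$; when $T_i = 0$, $Y_i = Y_i(0)$ and $W_i = -1/(1-p(x))$. Summing the two branches weighted by $p(x)$ and $1-p(x)$ gives
\[
g(b;x) = \frac{E[(Y_i(1)-b)^2 \mid X_i=x]}{p(x)} + \frac{E[(Y_i(0)-b)^2 \mid X_i=x]}{1-p(x)}.
\]
This is a strictly convex quadratic in $b$ (the overlap assumption ensures the denominators are bounded away from zero), so the first-order condition yields a unique minimizer.

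Setting $\partial g/\partial b = 0$ gives
\[
\frac{\mu_1(x) - b}{p(x)} + \frac{\mu_0(x) - b}{1-p(x)} = 0,
\]
and multiplying through by $p(x)(1-p(x))$ and solving for $b$ produces
\[
b = (1-p(x))\mu_1(x) + p(x)\mu_0(x),
\]
which is the claimed $B^*(x)$. Integrating over $X_i$ then confirms that $B^*$ minimizes the unconditional objective.

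There is no real obstacle here — the result follows once one recognizes that $W_i$ factors out of the squared deviation, reducing the functional optimization to a one-dimensional convex minimization at each $x$. The only care needed is in the case split on $T_i$ to express $W_i^2 Y_i$ in terms of the potential outcomes, which relies on Assumption \ref{assum:unconf} to replace conditional expectations of $Y_i(t)$ given $(X_i, T_i=t)$ with $\mu_t(x)$, and on Assumption \ref{assum:overlap} to justify dividing by $p(x)$ and $1-p(x)$.
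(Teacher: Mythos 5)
Your proposal is correct and follows essentially the same route as the paper: both reduce the problem to a pointwise minimization of a convex quadratic in $b = B(x)$ after conditioning on $X_i = x$, and the conditional moments you compute by splitting on $T_i$ are exactly the quantities $E[Y_i W_i^2 \mid X_i = x]$ and $E[W_i^2 \mid X_i = x]$ that the paper evaluates directly. The first-order condition you solve is equivalent to the paper's closed-form minimizer of the expanded quadratic, so there is no substantive difference.
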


Proposition~\ref{proposition:optimalB} shows that our denoising transformation of the IPW outcome includes the AIPW transformation as a special case.
Using the optimal function, we obtain \[
    Y_i W_i-B^*(X_i) W_i = \frac{T_i(Y_i-\mu_1(X_i))}{p(X_i)}-\frac{(1-T_i)(Y_i-\mu_0(X_i))}{1-p(X_i)}+\mu_1(X_i) -\mu_0(X_i),
 \]
which corresponds exactly to the AIPW transformation, also known as the doubly robust transformation.
Proposition \ref{proposition:optimalB} provides a different perspective on the AIPW transformation; namely, it effectively reduces the noise introduced by the IPW transformation.

\subsection{DIPW-Lasso}\label{sec:Denoised IPW-Lasso}

We estimate the CATE coefficients \(\beta\) using the Lasso. 
To achieve this, we first determine the nuisance function $B(x)$. Proposition \ref{proposition:optimalB} shows that $B^*(x)$ is optimal if known. 
However, when replaced by its estimator, it may not necessarily retain its optimality .
Since $\mu_1(x)$ and $\mu_0(x)$ can be complex functions, $B^*(x)$ may be poorly estimated when the sample size is small or when the sample sizes of the treatment and control groups differ significantly.
Our simulation and empirical results show that $\mu(x) \equiv E[Y|X=x]$ performs reasonably well as $B(x)$ in many settings.
We employ machine learning methods such as random forest, boosting, and deep neural networks to estimate $B(x)$.

After determining the nuisance function, we estimate the linear projection model \eqref{eq:denoised} by using the cross-fitting procedure of \textcite{chernozhukov2018dml}.
The procedure is as follows (Algorithm 1).

\textit{Step 1:} Randomly split the sample into \(K\) non-overlapping folds of equal size, $\{I_k\}_{k=1}^K$. For each fold \(I_k\), estimate \(B(x)\) using the data in its complement, denoted by \(\widehat{B}_{I_k^c}(\cdot)\). Then, for each \(i \in I_k\), obtain the prediction \(\widehat{B}_{I^c_k}(X_i)\).

\textit{Step 2:} Solve the optimization problem
\begin{equation}\label{eq:object1}
    \min_{\alpha_1, \alpha_2, \beta} \left\{ \frac{1}{2n} \sum_{k=1}^K \sum_{i \in I_k} 
    \left(Y_iW_i - X_i'\beta -\alpha_1 W_i - \alpha_2 \widehat{B}_{I^c_k}(X_i) W_i \right)^2 + \lambda \|\beta\|_1 \right\},
\end{equation}
where $\lambda$ is the regularization parameter.

We also consider an alternative algorithm to estimate $\beta$ (Algorithm 2).
Given that $\widehat B_{I^c_k}(\cdot)$ is estimated using the same procedure as in Step 1, in Step 2, we first estimate $\alpha_1$ and $\alpha_2$ by regressing $Y_i W_i$ on $(W_i,\widehat B_{I^c_k}(X_i)W_i)'$ and obtain the OLS residual:
\begin{equation}\label{eq:OLSresidual}
    \widehat{Y}^*_i = Y_iW_i - \widehat{\alpha}_1W_i - \widehat{\alpha}_2\widehat{B}_{I^c_k}(X_i) W_i.
\end{equation}
We then solve the following optimization problem:
\begin{equation}\label{eq:objec2}
    \min_{\beta} \left\{ \frac{1}{2n} \sum_{i=1}^n \left(\widehat{Y}^*_i - X'_i \beta\right)^2 + \lambda \|\beta\|_1 \right\}.
\end{equation}

The two algorithms are asymptotically equivalent.
This equivalence follows from two observations.
First, if $X_i$ is also replaced with the OLS residual from regressing it on $(W_i,\widehat B_{I^c_k}(X_i)W_i)'$ in Algorithm 2, the resulting estimator for $\beta$ is identical to the one obtained by solving \eqref{eq:object1} by the Frisch-Waugh-Lovell theorem for the Lasso (\cite{yamada2017frisch}).
Second, due to the orthogonality \eqref{orthogonality}, $X_i$ is asymptotically equivalent to its residual.
Consequently, \eqref{eq:object1} and \eqref{eq:objec2} are asymptotically equivalent.

Algorithm 2 can be easily extended to the case where no functional form is assumed for $\tau(x)$.
Using $\widehat Y_i^*$ as the pseudo-outcome, any nonparametric method can be applied to estimate CATE.

Another advantage of using Algorithm 2 is that it enables an evaluation of the effectiveness of the denoising procedure. Specifically, the \(R^2\) from regressing \(Y_i W_i\) on \((W_i, \widehat B_{I^c_k}(X_i) W_i)'\) provides a rough measure of the extent of denoising, where an $R^2$ value close to zero indicates that the denoising is not effective.

\section{Theoretical Results}
\label{sec:results}

We present the theoretical properties of the DIPW-Lasso based on Algorithm 2.
It is evident that denoising does not improve the convergence rate of the Lasso.
Therefore, we compare finite-sample prediction error bounds for the non-denoised and denoised Lasso.
We show that the DIPW-Lasso achieves a better error bound, even when $(\alpha_1, \alpha_2)$ and $B(x)$ are estimated, except for an event with an asymptotically negligible probability.

\subsection{Key Inequality for the Lasso}\label{sec:keyInequality}

We briefly review the finite-sample properties of the Lasso to understand how our denoising method improves a prediction error bound.

Let $\tilde{\beta}$ be the Lasso estimator for \eqref{eq:non-denoised} with the regularization parameter $\lambda$.
Further, let $X=(X_1, \dots, X_n)'$ and $e=(e_1, \dots, e_n)'$.
Using the so-called basic inequality (\cite{buhlmann2011statistics}) and the H\"older inequality, we obtain
\begin{equation}\label{eq:BasicInequality}
\frac{1}{n}\|X \tilde{\beta} - X \beta\|_2^2 \leq 2 \left\|\frac{1}{n} X'e \right\|_{\infty} \|\tilde{\beta} - \beta\|_1 + 2 \lambda\left(\|\beta\|_1 - \|\tilde{\beta}\|_1\right).
\end{equation}
Next, we define the event $\mathcal{T} = \left\{ \left\| X'e/n \right\|_{\infty} \leq \lambda_0 \right\}$.
Then, for \(\lambda \geq \lambda_0\), the following inequality holds under \(\mathcal{T}\):
\begin{equation}\label{eq:lasso_bound}
\frac{1}{n} \|X \tilde{\beta} - X\beta\|_2^2 \leq 4 \lambda \|\beta\|_1.
\end{equation}

The above result shows that the performance of the Lasso critically depends on $X'e/n$.
There exist known upper bounds for its $l_\infty$-norm, and $\lambda_0$ typically depends multiplicatively on the standard deviation of $e_i$  (see, e.g., \cite{buhlmann2011statistics}).
Thus, if the variance of the error term is reduced, then the prediction error bound can be improved by choosing a smaller $\lambda$.

\subsection{Prediction Error Bound}
We derive the prediction error bound for the IPW-Lasso and DIPW-Lasso and show that the denoising method improves this bound.

We impose the following two assumptions to derive the prediction error bound for the non-denoised IPW Lasso.

\begin{assumption}\label{assum:BoundednessofX}
There exists a positive constant $C$ such that $\max_{1 \leq j \leq p} \|X^j\|_\infty \leq C < \infty$, where $X^{j}$ is the $j$-th column of $X$.
\end{assumption}

\begin{assumption}\label{assum:finiteL4moment}
    The error term $e_i$ has a finite second moment.
\end{assumption}

Assumption \ref{assum:BoundednessofX} is adopted  from Assumption (A2) in  \textcite{buhlmann2015high} (see also Section 2.3.4 of \cite{van2014asymptotically} ). 
The boundedness assumption is sometimes employed in the random design setting to establish concentration inequalities.
Note that we do not assume independence between $X_i$ and $e_i$, as denoising would not be effective if they were independent.

Under Assumptions \ref{assum:BoundednessofX} and \ref{assum:finiteL4moment}, we can show that for any $\eta \in (0,1)$,
\begin{equation}\label{eq:noise_concentration}
P\left( \left\| \frac{1}{n} X'e \right\|_{\infty} \leq \frac{\sqrt{8} C\sigma_e \sqrt{\frac{\log 2p}{n}}}{\eta} \right) \geq 1 - \eta,
\end{equation}  
where $\sigma_e = \sqrt{E[e_i^2]}$.
The same inequality also holds when replacing $e$ and $\sigma_e$ with $u=(u_1, \dots, u_n)'$ and $\sigma_u =\sqrt{E[u_i^2]}$, respectively.

It follows from \eqref{eq:BasicInequality} and \eqref{eq:noise_concentration} that the non-denoised IPW-Lasso has the following property.

\begin{proposition}\label{proposition: basic_lassobound}
Suppose that Assumptions \ref{assum:BoundednessofX} and \ref{assum:finiteL4moment} hold. Let \(\lambda \geq \frac{ \sqrt{8} C\sigma_e}{\eta} \sqrt{\frac{\log 2p}{n}}\) for \(\eta \in (0,1)\). Then, we have
\[
\frac{1}{n} \| X \tilde{\beta} - X \beta \|_2^2 \leq 4\lambda \| \beta \|_1
\]
with probability at least \(1 -\eta\).  
\end{proposition}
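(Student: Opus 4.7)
The plan is to assemble the result from two ingredients already flagged in Section 3.1: the deterministic basic inequality \eqref{eq:BasicInequality}, and the probabilistic noise bound \eqref{eq:noise_concentration}. The strategy is to choose the regularization level so that $\lambda$ exceeds the high-probability upper bound $\lambda_0$ of $\|X'e/n\|_\infty$, and then the randomness in the problem enters only through the event $\mathcal{T}$.

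First I would establish the concentration bound \eqref{eq:noise_concentration} carefully, since this is the substantive ingredient behind the stated $\lambda$. The route is a maximal-type (Nemirovski) inequality for sums of independent mean-zero random vectors: since $E[e_i \mid X_i]=0$, the vectors $V_i = (X_{i1}e_i,\dots,X_{ip}e_i)'$ are i.i.d.\ and centered, and by Assumption \ref{assum:BoundednessofX} each satisfies $\|V_i\|_\infty \leq C|e_i|$, so $E[\|V_i\|_\infty^2] \leq C^2 \sigma_e^2$. The Nemirovski inequality then yields
\[
E\!\left[\left\|\tfrac{1}{n}X'e\right\|_\infty^2\right] \leq \frac{8 C^2 \sigma_e^2 \log(2p)}{n},
\]
and Jensen gives $E[\|X'e/n\|_\infty] \leq \sqrt{8}\,C\sigma_e \sqrt{\log(2p)/n}$. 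Applying Markov's inequality with threshold $\lambda_0 = \sqrt{8}\,C\sigma_e\,\eta^{-1}\sqrt{\log(2p)/n}$ delivers $P(\mathcal{T}) \geq 1 - \eta$, which is exactly \eqref{eq:noise_concentration}.

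The remainder is deterministic. On $\mathcal{T}$, since $\lambda \geq \lambda_0$, inequality \eqref{eq:BasicInequality} gives
\[
\frac{1}{n}\|X\tilde\beta - X\beta\|_2^2 \;\leq\; 2\lambda\,\|\tilde\beta - \beta\|_1 + 2\lambda\bigl(\|\beta\|_1 - \|\tilde\beta\|_1\bigr).
\]
Applying the triangle inequality $\|\tilde\beta-\beta\|_1 \leq \|\tilde\beta\|_1 + \|\beta\|_1$, the two $\|\tilde\beta\|_1$ terms cancel with opposite signs, leaving the claimed bound $4\lambda\|\beta\|_1$. Together with the high-probability statement from the previous step, this proves the proposition.

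The main obstacle is the maximal inequality for $E[\|X'e/n\|_\infty^2]$: with only the finite-second-moment assumption on $e_i$ (not sub-Gaussianity), one cannot rely on exponential concentration and must instead extract the $\log(2p)$ factor from a Nemirovski-type bound while paying the price of a $1/\eta$ (rather than $1/\sqrt{\log(1/\eta)}$) dependence via Markov. Once that bound is in hand, the rest of the argument is a bookkeeping exercise with the basic inequality.
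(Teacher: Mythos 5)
Your proof is correct and follows essentially the same route as the paper: a Nemirovski moment inequality combined with Assumptions \ref{assum:BoundednessofX} and \ref{assum:finiteL4moment} to bound $E[\|X'e/n\|_\infty^2]$, then Jensen and Markov to obtain the concentration bound \eqref{eq:noise_concentration}, and finally the basic inequality \eqref{eq:BasicInequality} on the event $\mathcal{T}$. The only difference is cosmetic: you spell out the triangle-inequality cancellation yielding $4\lambda\|\beta\|_1$, which the paper leaves implicit by citing \eqref{eq:lasso_bound}.
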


Next, we investigate the properties of the DIPW-Lasso using Algorithm 2.
We first impose the following assumption on $\widehat B_{I_k^c}(x)$.

\begin{assumption}\label{assum:l2convergence}  
For $k=1, \dots, K$, \(\widehat{B}_{I_k^c}(x)\) converges in $L_2$ to a function $B(x)$:
\[
\lim_{n \to \infty} E \left[\int \left(\widehat{B}_{I_k^c}(x) - B(x)\right)^2 dF(x) \right] = 0, 
\]
where $F(x)$ is the distribution function of $X_i$.
\end{assumption} 

Assumption~\ref{assum:l2convergence} only requires $\widehat{B}_{I_k^c}(x)$ to converge to a certain function, whereas
we do not impose any requirement on its convergence rate.
This allows for flexible estimation in conjunction with a wide range of machine learning methods.

For the function $B(x)$ given in Assumption~\ref{assum:l2convergence}, we define the population counterpart of $\hat{Y}_i^*$ as
\begin{equation}
  Y_i^*=Y_i W_i - Z_i'\gamma, 
\end{equation}
where
$Z_i = (W_i, B(X_i)W_i)'$ and $\gamma = (\alpha_1, \alpha_2)'$. 
To ensure that $\gamma$ is well-defined, we impose the following assumption.
\begin{assumption}\label{assum:NonSingularity}  
$\det\left(E\left[ Z_i Z_i' \right]\right) \neq 0$.
\end{assumption}

An inequality similar to \eqref{eq:BasicInequality} can be derived for the estimator $\hat \beta$ that solves \eqref{eq:objec2}.
Let $\hat Z_i=(W_i, \hat{B}_{I_k^c}(X_i)W_i)$,
and define $Z=(Z_1, \dots, Z_n)'$, $\hat{Z}=(\hat Z_1, \dots, \hat Z_n)'$, and $Y^* = (Y_1^*, \dots, Y_n^*)'$.
Then, we obtain
\begin{equation}\label{eq:BasicInequalityofDIL}
\begin{aligned}
    \frac{1}{n}\|X \hat{\beta} - X \beta\|_2^2 & \leq  2 \left\|\frac{1}{n} X' \left(u + \left(Z - P_{\hat{Z}} Z\right) \gamma - P_{\hat{Z}}Y^* \right)\right\|_{\infty}\|\hat{\beta} - \beta\|_1  \\
    & \quad +2 \lambda\left(\|\beta\|_1 - \|\hat{\beta}\|_1\right),
\end{aligned}
\end{equation}
where \(P_{\hat{Z}} = \hat{Z} (\hat{Z}' \hat{Z})^{-1} \hat{Z}'\).
The effects of estimating $B(\cdot)$ and $\gamma$ are captured by the terms $X'\left(Z - P_{\hat{Z}} Z\right)\gamma/n$ and $X'P_{\hat{Z}}Y^*/n$, respectively.
Under Assumptions \ref{assum:l2convergence} and \ref{assum:NonSingularity}, these two terms are of smaller order than $X'u/n$ in terms of the $\ell_\infty$ norm.

The following theorem for the DIPW-Lasso can now be established.
\begin{theorem}\label{main_theorem}
Suppose that Assumptions~\ref{assum:overlap} and \ref{assum:BoundednessofX}--\ref{assum:NonSingularity} hold.
Let \(\lambda \geq (\frac{\sqrt{8} C\sigma_u}{\eta} + \epsilon) \sqrt{\frac{\log 2p}{n}}\) for \(\eta \in (0,1)\) and \(\epsilon > 0\). Then, we have
\[
\frac{1}{n} \| X \hat{\beta} - X \beta \|_2^2 \leq 4 \lambda \| \beta \|_1,
\]
with probability at least \(1 - \eta- \delta_n\),
where
\[
\delta_n = P \left( \left\| \frac{1}{n} X'(Z-P_{\hat Z}Z)\gamma + \frac{1}{n} X'P_{\hat Z}Y^* \right\|_\infty \geq \epsilon \sqrt{\frac{\log 2p}{n}} \right).
\]
Moreover, $\delta_n \to 0$ as $n \to \infty$.
\end{theorem}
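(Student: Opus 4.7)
The plan is to execute the standard Lasso prediction-error argument starting from the basic inequality \eqref{eq:BasicInequalityofDIL}, treating the nuisance-estimation terms as slack absorbed into $\epsilon$. First, observe that $u_i = Y_i^* - X_i'\beta$ satisfies $E[u_i \mid X_i]=0$ (since $E[W_i \mid X_i] = E[B(X_i)W_i \mid X_i] = 0$ together with $E[e_i \mid X_i] = 0$ from \eqref{eq:non-denoised} force this), and $\sigma_u^2 < \infty$ by Assumption~\ref{assum:finiteL4moment}. Hence the derivation of \eqref{eq:noise_concentration} applies verbatim to $u$, giving $\|X'u/n\|_\infty \leq (\sqrt{8}C\sigma_u/\eta)\sqrt{\log 2p/n}$ on an event $\mathcal{A}_1$ with $P(\mathcal{A}_1) \geq 1-\eta$. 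On the intersection of $\mathcal{A}_1$ with the event whose complement defines $\delta_n$, the triangle inequality keeps the full $\ell_\infty$ noise in \eqref{eq:BasicInequalityofDIL} below $\lambda$. Combining this with $\|\hat\beta-\beta\|_1 \leq \|\hat\beta\|_1 + \|\beta\|_1$ gives $n^{-1}\|X\hat\beta-X\beta\|_2^2 \leq 4\lambda\|\beta\|_1$ with probability at least $1-\eta-\delta_n$.

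The substantive task is to prove $\delta_n \to 0$. I would control $R_1 := X'(Z - P_{\hat Z}Z)\gamma/n$ and $R_2 := X'P_{\hat Z}Y^*/n$ separately in $\ell_\infty$. For $R_1$, since the first column of $Z$ coincides with the first column of $\hat Z$ (both equal $W$), we have $(I-P_{\hat Z})Z\gamma = \alpha_2(I-P_{\hat Z})\Delta$ with $\Delta_i := (B(X_i)-\hat B_{I_k^c}(X_i))W_i$, which I would split as $X'\Delta/n - X'P_{\hat Z}\Delta/n$. Thanks to cross-fitting, $E[X_{ij}\Delta_i \mid X_i, I_k^c] = X_{ij}(B - \hat B_{I_k^c})(X_i) \cdot E[W_i \mid X_i] = 0$, so each coordinate of $X'\Delta/n$ is a conditionally mean-zero average with conditional variance of order $r_n/n$, where $r_n := E[(\hat B_{I_k^c}(X)-B(X))^2] \to 0$ by Assumption~\ref{assum:l2convergence}. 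A maximal inequality over $j$, with Assumption~\ref{assum:BoundednessofX} bounding the summands, then gives $\|X'\Delta/n\|_\infty = o_p(\sqrt{\log 2p/n})$.

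For $R_2$ and the remaining piece $X'P_{\hat Z}\Delta/n$, I would use the factorization of the $j$-th coordinate as $a_j'\,(\hat Z'\hat Z/n)^{-1}\,(\hat Z'V/n)$, where $a_j := ((X^j)'\hat Z/n)'$ and $V \in \{\Delta, Y^*\}$. The middle factor converges in probability to $(E[Z_iZ_i'])^{-1}$ under Assumptions~\ref{assum:l2convergence} and~\ref{assum:NonSingularity}, so it is $O_p(1)$. Every entry of $a_j$ has population mean zero because $E[W_i \mid X_i]=0$, so $\max_j\|a_j\|_\infty = O_p(\sqrt{\log 2p/n})$ by the same Markov-type concentration argument that produced \eqref{eq:noise_concentration}. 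Finally, $\hat Z'Y^*/n = O_p(\sqrt{r_n}+n^{-1/2})$: the first coordinate is $O_p(n^{-1/2})$ by the projection identity $E[W_iY_i^*]=0$ that defines $\gamma$; the second coordinate splits into the mean-zero average $n^{-1}\sum_i B(X_i)W_iY_i^*$ of order $O_p(n^{-1/2})$, by the analogous orthogonality $E[B(X_i)W_iY_i^*]=0$, plus a cross-fitted bias $n^{-1}\sum_i (\hat B_{I_k^c}-B)(X_i)W_iY_i^* = O_p(\sqrt{r_n})$ by Cauchy-Schwarz. An identical argument gives $\hat Z'\Delta/n = O_p(\sqrt{r_n})$. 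Multiplying these rates, $\|R_2\|_\infty$ and $\|X'P_{\hat Z}\Delta/n\|_\infty$ are both $o_p(\sqrt{\log 2p/n})$, hence $\delta_n \to 0$.

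The main obstacle is obtaining uniform $\ell_\infty$ control across the $p$ coordinates with only $L_2$-consistency of $\hat B$ (and no rate). Two structural cancellations make this possible: $E[W_i \mid X_i] = 0$ eliminates the first-order bias of nuisance estimation in every $X$-weighted sum, and the population orthogonality $E[Z_iY_i^*] = 0$ built into $\gamma$ plays the analogous role in $\hat Z'Y^*/n$. Together they reduce the nuisance contribution to a stochastic term that multiplies the baseline concentration rate $\sqrt{\log 2p/n}$ by a factor of order $\sqrt{r_n}$, which vanishes regardless of the rate at which $r_n \to 0$.
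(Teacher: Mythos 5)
Your proposal is correct and follows essentially the same route as the paper: the same basic-inequality plus triangle-inequality argument for the probability bound, the same factorization $\|X'P_{\hat Z}V/n\|_\infty \leq \|X'\hat Z/n\|_\infty\,\|(\hat Z'\hat Z/n)^{-1}\|_2\,\|\hat Z'V/n\|_2$, and the same structural cancellations ($E[W_i\mid X_i]=0$, cross-fitting conditional independence, and the projection orthogonality $E[Z_iY_i^*]=0$) combined with Nemirovski-type maximal inequalities and the $L_2$-consistency of $\hat B$. The only cosmetic difference is that you claim explicit $O_p(\sqrt{r_n}+n^{-1/2})$ rates for $\hat Z'Y^*/n$ where the paper settles for $o_p(1)$, which suffices and avoids the extra moment conditions your sharper rate would require.
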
  

Theorem \ref{main_theorem} states that the prediction error bound can be improved by denoising as long as $\sigma_u < \sigma_e$, although the probability that the bound holds decreases by an asymptotically negligible amount.

Although Theorem~\ref{main_theorem} gives the error bound for the DIPW-Lasso when $\lambda$ is chosen according to a theoretically justified rule, we recommend choosing $\lambda$ via cross-validation in practice, because this
often leads to better performance than more theoretically motivated choices. Although it is desirable to analyze the properties of the DIPW-Lasso when $\lambda
$ is selected through cross-validation, we avoid this here due to its technical complexity (see, e.g., \cite{chetverikov2021cross} for the properties of the cross-validated Lasso).
Instead, we confirm through simulations that the regularization parameter selected by cross-validation becomes significantly smaller when denoising is applied compared to when it is not.

\begin{remark}
Theorem \ref{main_theorem} holds even if $\beta$ is not sparse.
If, however, $\beta$ is sparse, then the prediction error rate can be improved by adding an assumption on $X$.
For instance, suppose that the compatibility condition is satisfied almost surely with compatibility constant $\phi_0 >0$ (see \cite{buhlmann2011statistics} for the definition of the compatibility condition).
Then, plugging in $\lambda= 2\left(\frac{ \sqrt{8} C\sigma_u}{\eta} + \epsilon \right) \sqrt{\log 2p/n}$, with probability at least $1-\eta - \delta_n$, we obtain
\[
\frac{1}{n} \| X \hat{\beta} - X \beta \|_2^2 \leq  \frac{36 \left(\frac{\sqrt{8}C \sigma_u }{\eta} + \epsilon  \right)^2 s_0 \log 2p}{n \phi_0^2},
\]
where $s_0$ is the number of nonzero elements in $\beta$ (see Theorem 6.1 of \cite{buhlmann2011statistics}).
We can confirm that an improvement in the bound is possible compared to the case without denoising.

\end{remark}

\section{Monte Carlo Simulation}\label{sec: simulation}
In this section, we evaluate the improvement of the DIPW-Lasso over the non-denoised version through simulation studies. Furthermore, we compare our method with recent machine learning-based estimation methods for the CATE, namely, the causal forest \parencite{wager2018causalforest}, T-learner, X-learner \parencite{kunzel2019metalearners}, R-learner \parencite{nie2021quasi}, and DR-learner \parencite{kennedy2023towards}.

\subsection{Simulation designs}\label{sec: simudesign}
We conducted our experiments based on  the design of \textcite{nie2021quasi}.
We generate 50 covariates, where \( X_{i1} \) through \( X_{i30} \) are independently drawn from \(\text{Unif}(0, 1) \), and \( X_{i31} \) through \( X_{i50}\) are independently drawn from \(\mathcal{N}(0, 1) \).
We construct complex potential outcome functions and a sparse linear CATE model as follows:
\begin{align*}
Y_i &= b(X_i) + T_i \tau(X_i) + \epsilon_i \\
b(X_i) &= 5 \big\{ \sin(\pi X_{i1} X_{i2}) + 2\bigl(X_{i3} - 0.5\bigr)^2 + X_{i4} + 0.5\,X_{i5} 
      + 2\log\bigl(1 + \exp(X_{i31} + X_{i32} + X_{i33})\bigr) \\
    &\quad + \max \{0, X_{i31} + X_{i32} + X_{i33} \}
      + \max \{ 0, X_{i34} + X_{i35} \} \big\}, \\
\tau(X_i) &= 0.5\bigl(X_{i1} + X_{i2}\bigr) + X_{i4} + \tfrac{1}{3}X_{i32} + 2\,X_{i40},
\end{align*}
where \( \epsilon_i \mid X_i\sim \mathcal{N}(0, 1) \). 
The data-generating process simulates a challenging weak signal setting where only approximately 3.5\% of the outcome variation can be attributed to the treatment effect.

The treatment indicator \(T_i\) follows a Bernoulli distribution with a known probability \(p=P(T_i=1)\). We consider \(p = 0.5\) and \(p = 0.2\) in our simulations. The latter case represents a treatment-control imbalance setting.
Experimental settings with imbalanced treatment and control groups are common in practice. For example, in marketing and advertising experiments, it is rare to assign half of the units to treatment or control due to cost considerations. Therefore, the performance of CATE estimation in such imbalanced settings is of practical importance.

We report the results for  the DIPW-Lasso using Algorithm 1, with $B(x) =\mu(x)$ and $K=5$.
To enable a fair performance comparison, all nuisance parameters are estimated using random forests with the same configuration across all methods.
Moreover, we employ the Lasso for estimating CATE in all pseudo-outcome regression methods (i.e., the R-learner, DR-learner, IPW, and DIPW), with the regularization parameter selected via 10-fold cross-validation.
For the X-learner, we also apply the Lasso in the CATE estimation stage to take advantage of the structural information.
For methods where it is difficult to impose linearity on CATE, namely the causal forest and T-learner, we follow their standard implementation procedures.
Therefore, the experimental setting is somewhat disadvantageous for the causal forest and T-learner.

\subsection{Simulation results}
Here, we present the simulation results and briefly discuss them.
We generate \( n =1000 \) observations for training and \( 10000\) independent observations for evaluating the out-of-sample RMSE. For each method, the RMSE is calculated 500 times through repeated simulations of CATE estimation.
\begin{figure}[htbp]
  \centering
  \begin{subfigure}[t]{0.49\textwidth}
    \centering
    \includegraphics[width=\linewidth]{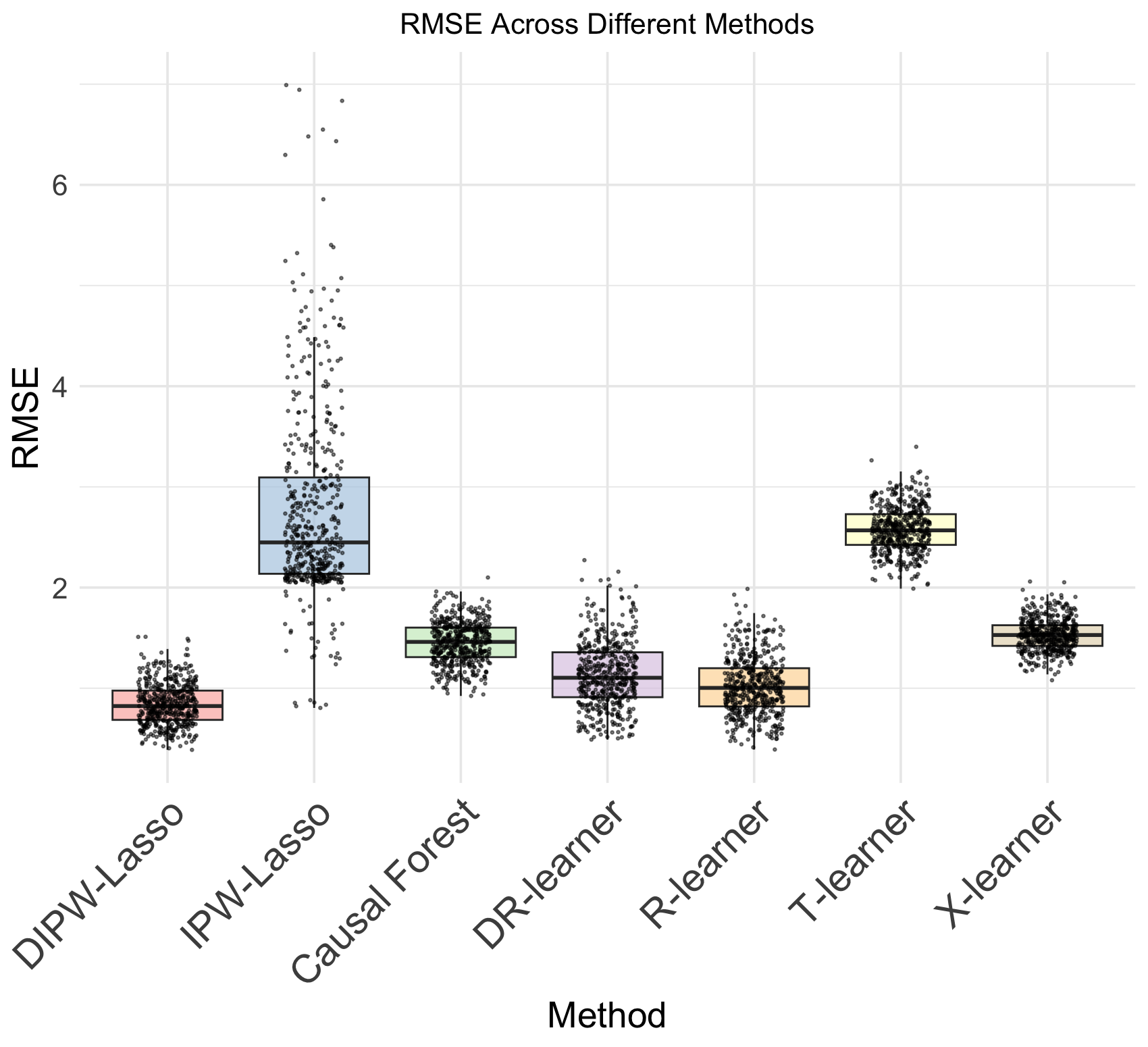}
    \caption{RMSE}
    \label{fig:0.5rmse}
    \par\smallskip
  \end{subfigure}
  \hfill
  \begin{subfigure}[t]{0.49\textwidth}
    \centering
    \includegraphics[width=\linewidth]{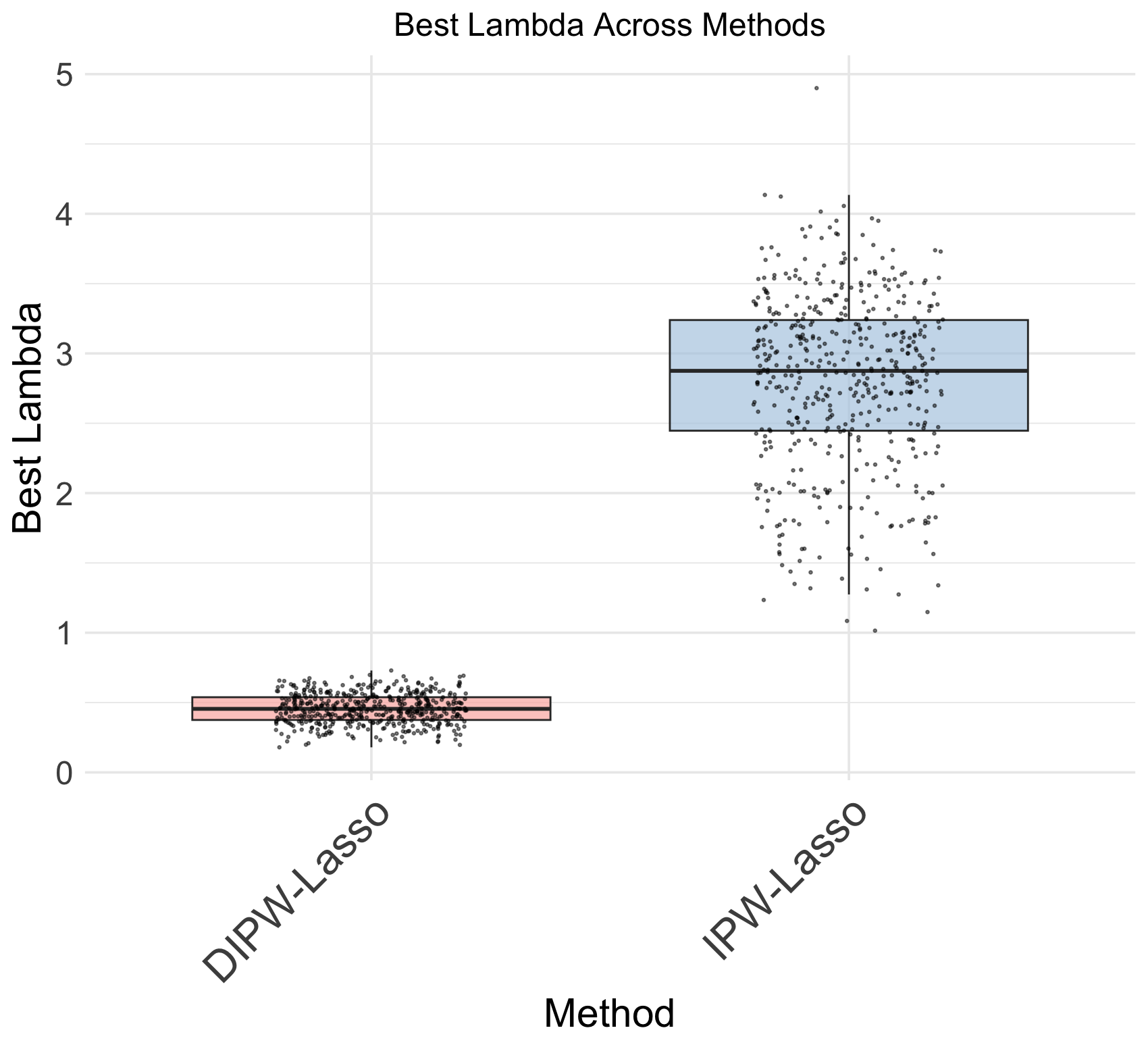} 
    \caption{$\lambda$ selection }
    \label{fig:0.5lambda}
    \par\smallskip
  \end{subfigure}
  \caption{Simulation results for \(p = 0.5\). Boxplots for (a) RMSE results for CATE estimation and (b) 10-fold cross-validation results for \(\lambda\) in DIPW-Lasso and IPW-Lasso. } 
  \label{fig:0.5simuresult}
\end{figure}

Figure \ref{fig:0.5rmse} presents the result for $p=0.5$, showing that the DIPW-Lasso achieved the best performance among all methods. As we would expect,  the two methods that do not allow linearity to be imposed on CATE tend to perform poorly.
Incorporating structural information about CATE substantially improves estimation accuracy, although the X-learner performs rather poorly.
Moreover, we find that the DIPW-Lasso dramatically outperforms the IPW-Lasso. 
The IPW-Lasso not only has a larger average RMSE but also exhibits greater variance, indicating that its estimation results are quite unstable.

Figure \ref{fig:0.5lambda} plots the values of the regularization parameters for the IPW-Lasso and DIPW-Lasso selected by 10-fold cross-validation.
As shown, the DIPW-Lasso exhibits greater stability in the selected regularization parameters, achieving both a smaller average and a smaller variance in comparison to the non-denoised version.
This is consistent with our theoretical result that improved estimation performance is associated with the selection of a smaller regularization parameter.
\begin{figure}[htbp]
  \centering
  \begin{subfigure}[t]{0.48\textwidth}
    \centering
    \includegraphics[width=\linewidth]{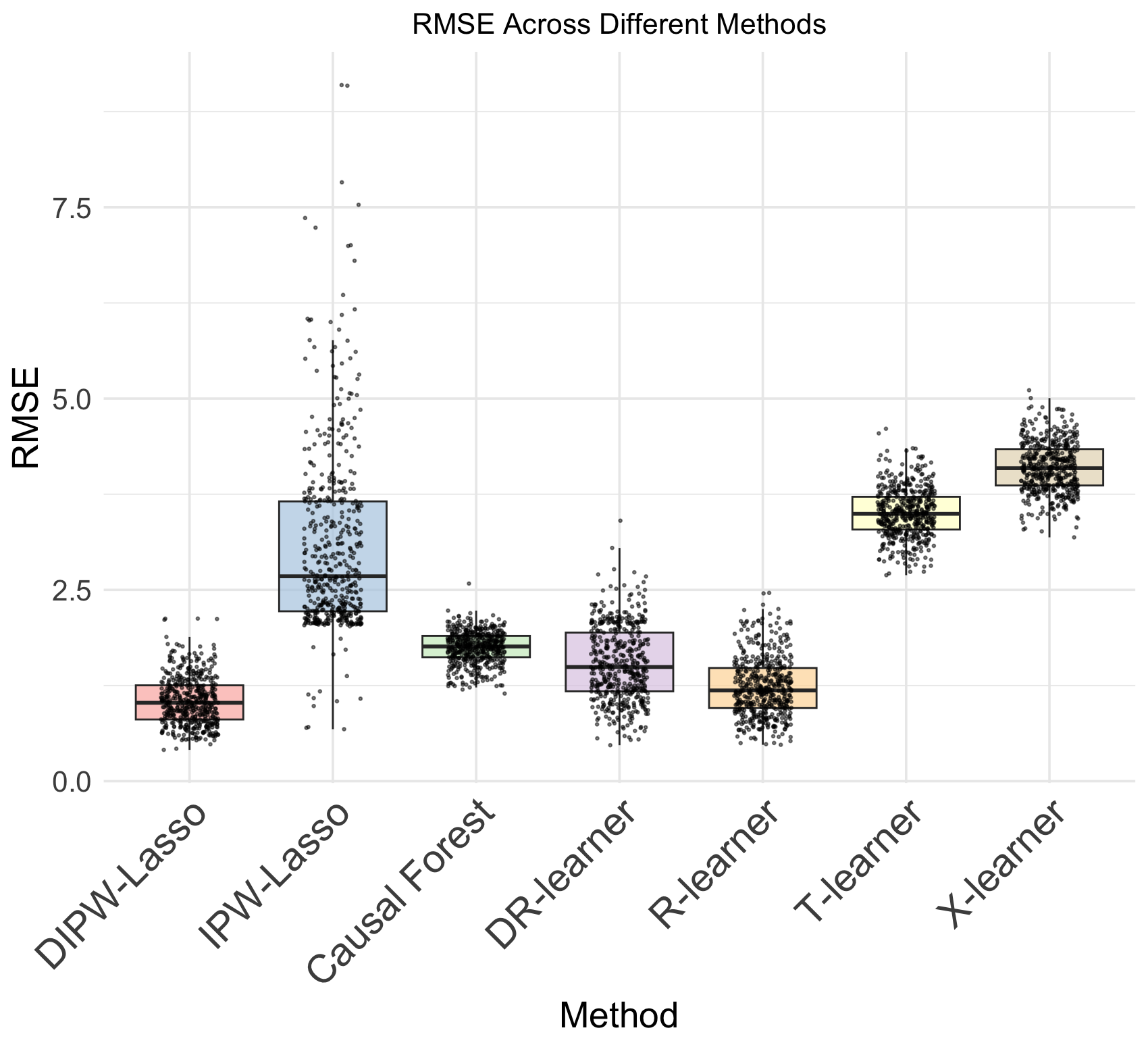}
    \caption{RMSE}
    \label{fig:0.2rmse}
    \par\smallskip
  \end{subfigure}
  \hfill
  \begin{subfigure}[t]{0.48\textwidth}
    \centering
    \includegraphics[width=\linewidth]{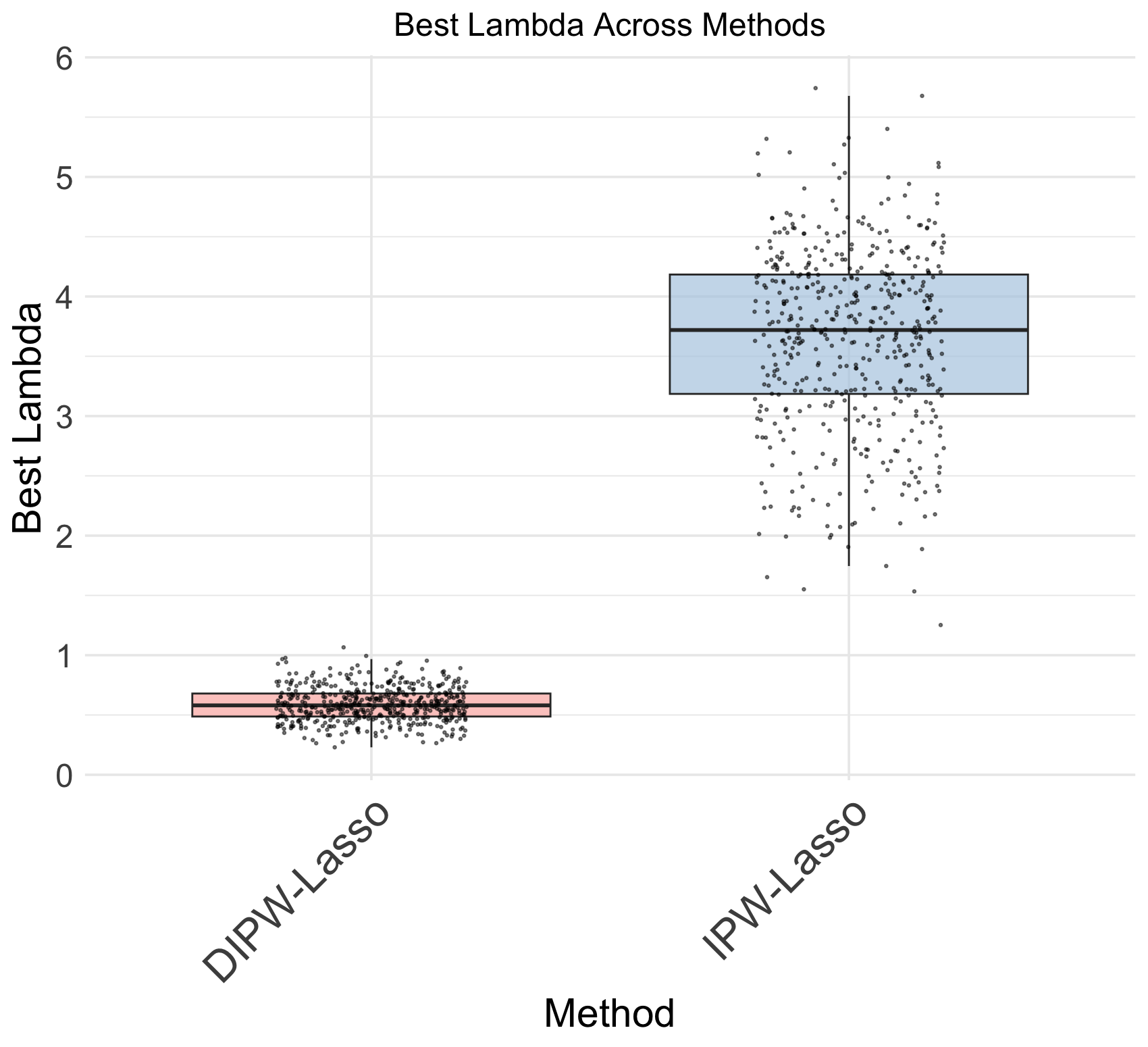} 
    \caption{$\lambda$ selection}
    \label{fig:0.2lambda}
    \par\smallskip
  \end{subfigure}
  \caption{Simulation results for \(p = 0.2\). Boxplots of (a) RMSE results for CATE estimation and(b) 10-fold cross-validation results for \(\lambda\) in DIPW-Lasso and IPW-Lasso.}
  \label{fig:0.2simuresult}
\end{figure}

Figure~\ref{fig:0.2rmse} shows the results for $p=0.2$, in which the DIPW-Lasso again achieves the highest accuracy. The DR-learner is less competitive in this setting, as the DR-learner uses both \(\mu_1 \) and \(\mu_0\) as nuisance functions, and when the treatment group is small, the estimation of \( \mu_1\) becomes less accurate, leading to a decline in overall performance.

Note, although the results are not reported in detail in the present paper, we also conducted experiments with other values replacing the constant 5 multiplier in the equation for $b(X_i)$.
As this multiplier decreases, the performance gap between the DIPW-Lasso and other methods narrows, but the DIPW-Lasso still achieves the lowest RMSE in all cases.

\subsection{Evaluating CATE Estimation via Ranking Ability}\label{sec:upliftmodeling}
Another method for evaluating the performance of CATE estimation is to assess its ranking ability, commonly referred to as uplift modeling. Uplift modeling was initially introduced in marketing analytics (e.g., \cite{radclifte2008identifying}) and has recently been incorporated into causal inference, with applications in fields such as personalized medicine and policy evaluation (e.g., \cite{jaskowski2012uplift,athey2025machine}).
For a comprehensive review of uplift modeling techniques, see, for example, \textcite{gutierrez2017causal}.

We provide only a brief summary of uplift modeling here.
Given an estimated CATE model $\hat{\tau}(x)$, we rank the units in the test set in descending order of their predicted CATE values. 
The uplift value for the top-$k$ units is computed as
\begin{equation}\label{eq:upliftvalue}
U(k)=\widehat{ATE}_k \left(\hat{\tau}(x) \right) \times k,
\end{equation}
where $\widehat{ATE}_k \left(\hat{\tau}(x) \right)$ denotes the difference-in-means estimator of ATE among the top-$k$ units.
The graph of $U(k)$ is called the uplift curve\footnote{There are several ways to compute the uplift curve (see, e.g., \cite{devriendt2020learning}) and a variant known as the Qini curve. Formula \eqref{eq:upliftvalue} is a commonly used method in which the uplift value directly reflects the expected gain from treating the top-ranked units. This approach also aligns  with the scikit-uplift package (https://www.uplift-modeling.com/en/latest/) and  formula (8) in \textcite{devriendt2020learning}.}. Figure \ref{fig:upliftcurveexample} presents uplift curves obtained from the IPW-Lasso and DIPW-Lasso. The ranking ability of each method is evaluated by the area under the uplift curve (AUUC), where a higher AUUC reflects better ranking performance.

\begin{figure}[htbp]
    \centering
    \includegraphics[width=0.6\linewidth]{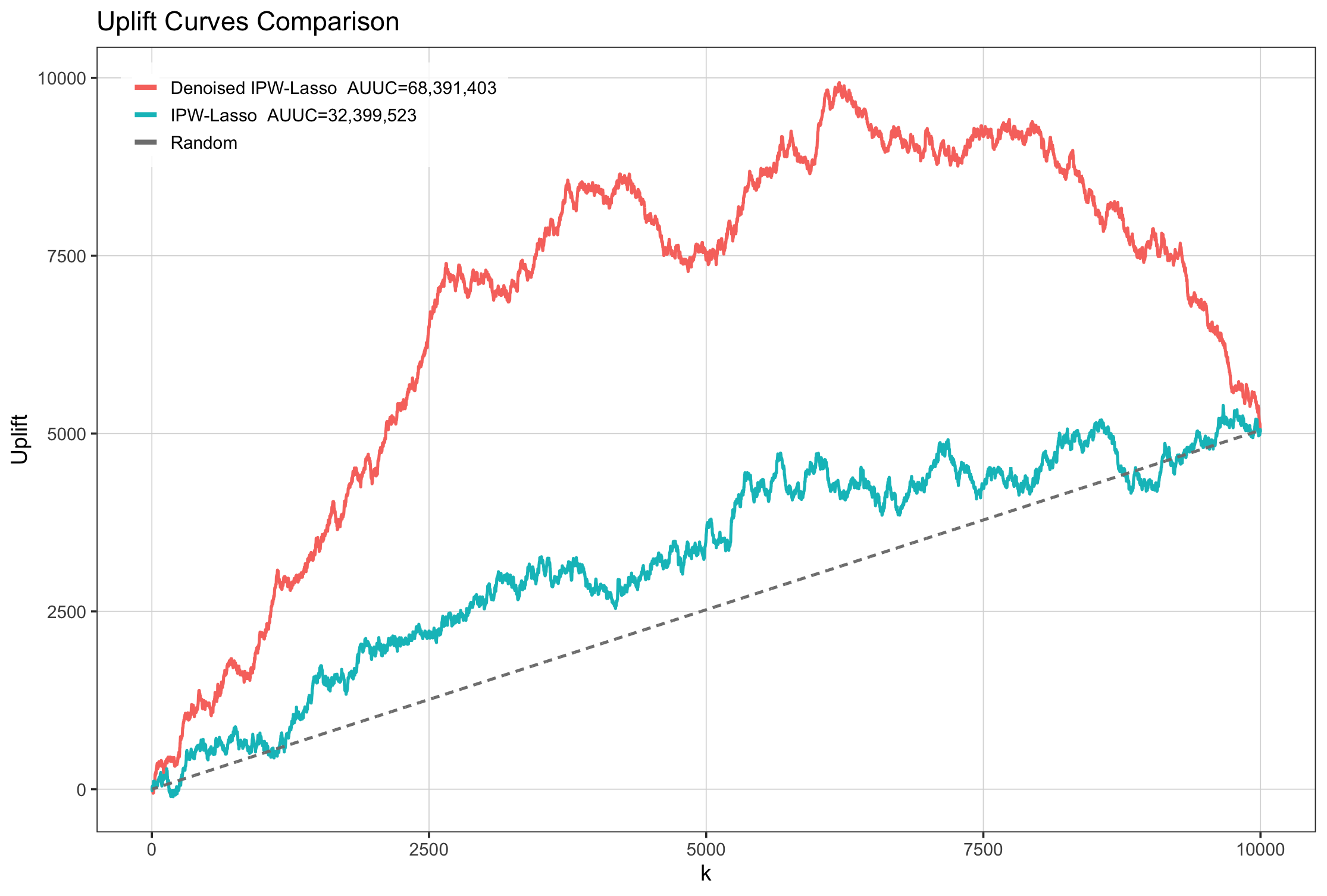}
    \caption{Example of uplift curve based on simulated data $p=0.5$, as introduced in Section \ref{sec: simudesign}. The result shows that the DIPW-Lasso achieved a higher AUUC than the non-denoised version. The improvement in CATE estimation by the proposed denoising method also gives rise to greater ranking performance.}
    \label{fig:upliftcurveexample}
\end{figure}

To examine whether the AUUC reflects the performances of different CATE estimations , we compute the AUUC for each method using the results in Section \ref{sec: simudesign}.

\begin{figure}[htbp]
  \centering
  \begin{subfigure}[t]{0.48\textwidth}
    \centering
    \includegraphics[width=\linewidth]{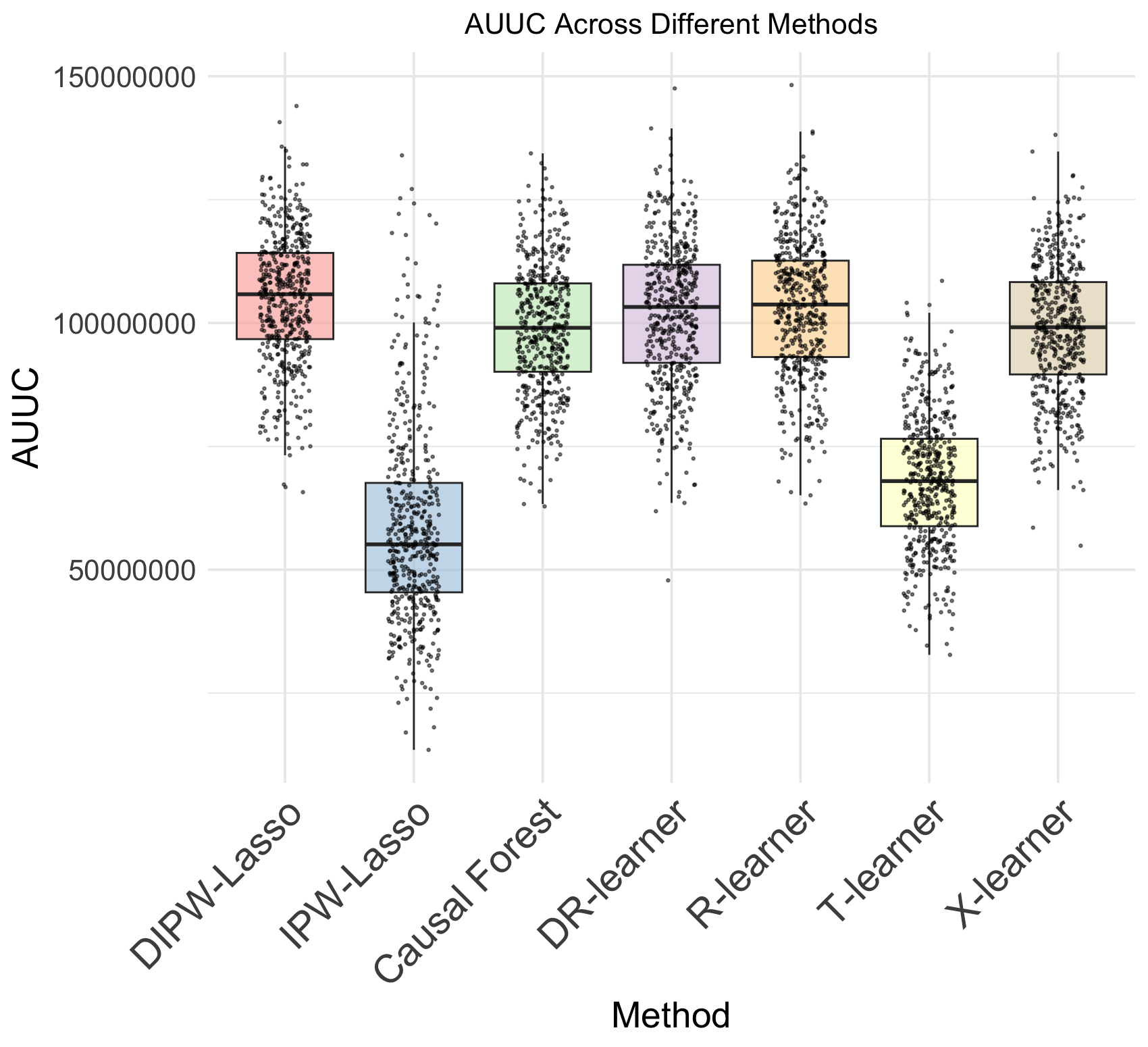}
    \caption{AUUC, \(p = 0.5\)}
    \label{fig:auuc_p0.5}
  \end{subfigure}
  \hfill
  \begin{subfigure}[t]{0.48\textwidth}
    \centering
    \includegraphics[width=\linewidth]{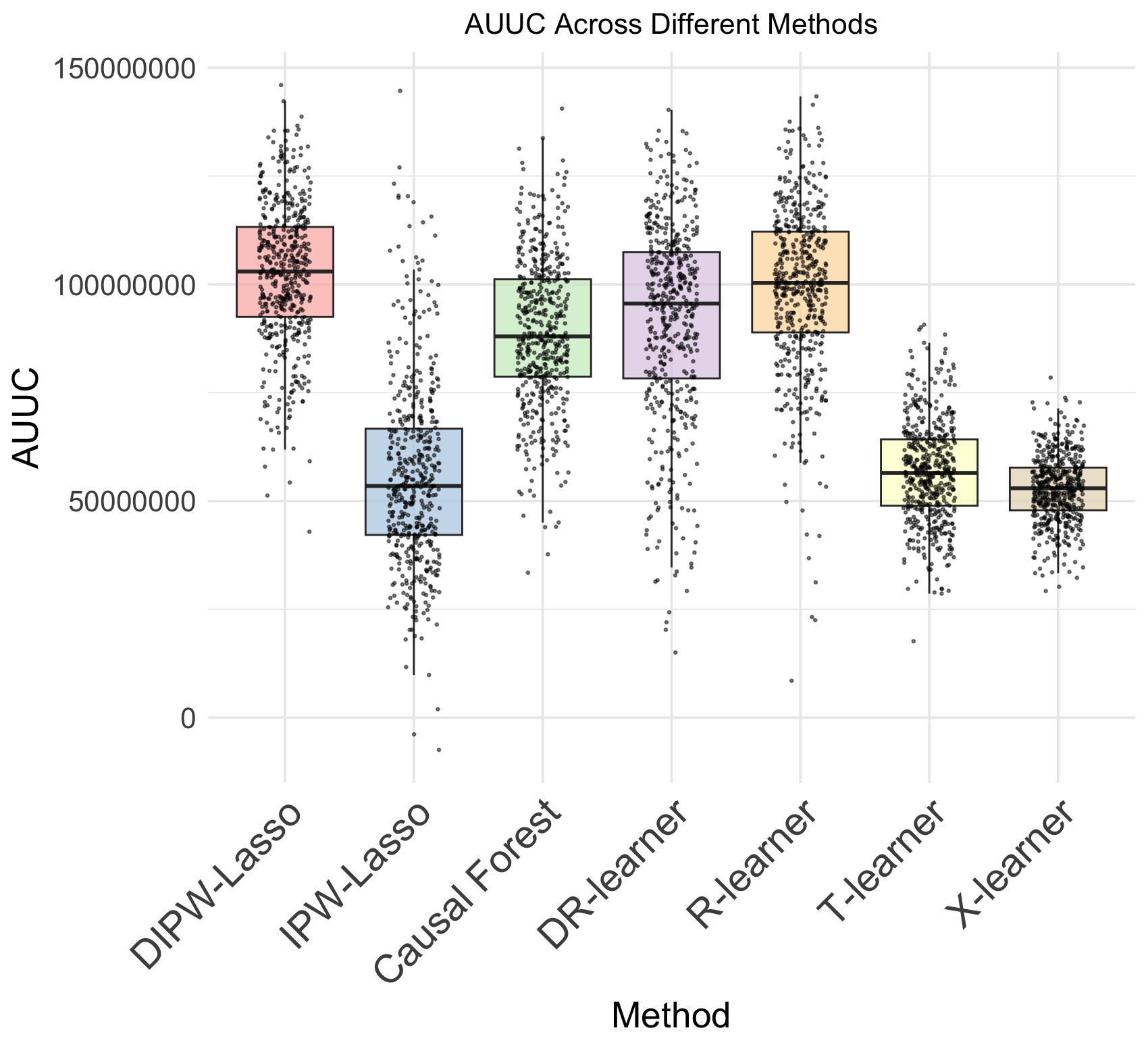}
    \caption{AUUC, \(p = 0.2\)}
    \label{fig:auuc_p0.2}
  \end{subfigure}

  \caption{Boxplots of AUUC over 500 simulation runs.}
  \label{fig:auuc_boxplots}
\end{figure}

\begin{table}[htbp]
  \centering
    \begin{subtable}[t]{0.48\textwidth}
      \centering
      \begin{adjustbox}{max width=\linewidth}
        \begin{tabular}{lrr}
          \toprule
          \textbf{Method} & \textbf{AUUC} & \textbf{RMSE} \\
          \midrule
          DIPW-Lasso     & 104{,}833{,}786 & 0.84 \\
          R-learner      & 103{,}172{,}229 & 1.02 \\
          DR-learner     & 102{,}009{,}295 & 1.13 \\
          Causal-Forest  & 98{,}885{,}547  & 1.45 \\
          X-learner      & 98{,}775{,}994  & 1.53 \\
          T-learner      & 67{,}991{,}024  & 2.58 \\
          IPW-Lasso      & 58{,}641{,}228  & 2.76 \\
          \bottomrule
        \end{tabular}
      \end{adjustbox}
      \caption{\(p = 0.5\)}
      \label{tab:metrics_p0.5}
    \end{subtable}
  \hfill
    \begin{subtable}[t]{0.48\textwidth}
      \centering
      \begin{adjustbox}{max width=\linewidth}
        \begin{tabular}{lrr}
          \toprule
          \textbf{Method} & \textbf{AUUC} & \textbf{RMSE} \\
          \midrule
          DIPW-Lasso     & 102{,}494{,}879 & 1.05 \\
          R-learner      & 99{,}121{,}149  & 1.23 \\
          DR-learner     & 91{,}780{,}034  & 1.54 \\
          Causal-Forest  & 89{,}006{,}750  & 1.75 \\
          T-learner      & 56{,}743{,}482  & 3.51 \\
          IPW-Lasso      & 55{,}989{,}419  & 3.07 \\
          X-learner      & 52{,}791{,}721  & 4.10 \\
          \bottomrule
        \end{tabular}
      \end{adjustbox}
      \caption{\(p = 0.2\)}
      \label{tab:metrics_p0.2}
\end{subtable}
  \caption{Mean AUUC and RMSE across 500 simulation runs for each method.}
  \label{tab:metrics_all}
\end{table}

The simulation results shown in Figure~\ref{fig:auuc_boxplots} and Table~\ref{tab:metrics_all} suggest that better ranking performance tends to be associated with more accurate CATE estimation. The DIPW-Lasso performs best in terms of both AUUC and RMSE. Overall , uplift modeling serves as a valuable complement to RMSE-based evaluation. It is particularly advantageous in empirical studies because of it not requiring knowledge of the true causal effects, which are unobservable in practice.

\section{Real-World Applications}\label{sec:GOTV}
In this section, we revisit two large-scale experimental datasets, the GOTV dataset and the Criteo Uplift Modeling dataset, to show the efficacy of the DIPW-Lasso.

To evaluate performance, we randomly split each dataset, assigning 75\% as the training set and the remaining 25\% as the test set. We estimate the CATE on the training set using six methods: the DIPW-Lasso, IPW-Lasso (without denoising), T-learner, X-learner, DR-learner, and causal forest. For the T-learner and X-learner, we use random forests as base learners. Random forests are also used to estimate the nuisance components in the DIPW-Lasso and DR-learner. In the pseudo-outcome regression step of the DIPW-Lasso, IPW-Lasso, and DR-learner, we apply Lasso regression. For the causal forest, we implement the method using the \texttt{econml} library. Finally, we evaluate the performance of each method on the test set.

\subsection{Get-Out-The-Vote experiment}
We use data from the Social Pressure and Voter Turnout experiment by \textcite{gerber2008social}, a large-scale field experiment conducted in Michigan during the 2006 general election. The experiment was designed to evaluate the effects of four social pressure treatments on voter mobilization. Among these treatments, ``neighbors'', which involved sending mailers that promised to publicize recipients' own turnout records to their neighbors after the election, has attracted considerable attention due to its strong effect. As reported by \textcite{gerber2008social}, the ``neighbors'' treatment increased voter turnout by 8.1 percentage points on average. 

Our empirical analysis has two goals: (1) to estimate CATE and leverage the model to optimize treatment allocation; and (2) to uncover interpretable patterns of treatment effect heterogeneity.

Our data consist of individuals from the ``neighbors'' treatment group (38,201 observations) and the control group (191,243 observations). Binary voting records from the five elections held prior to the 2004 general election were used to construct a count variable representing the past voting history. The count variable was then one-hot encoded with zero voting history as the reference category. Other demographic variables include the gender indicator, age, and household size. The outcome is a binary indicator of turnout in the 2006 local elections.

\subsubsection{Targeting Performance }

\begin{figure}[htbp]
    \centering
    \includegraphics[width=0.9\linewidth]{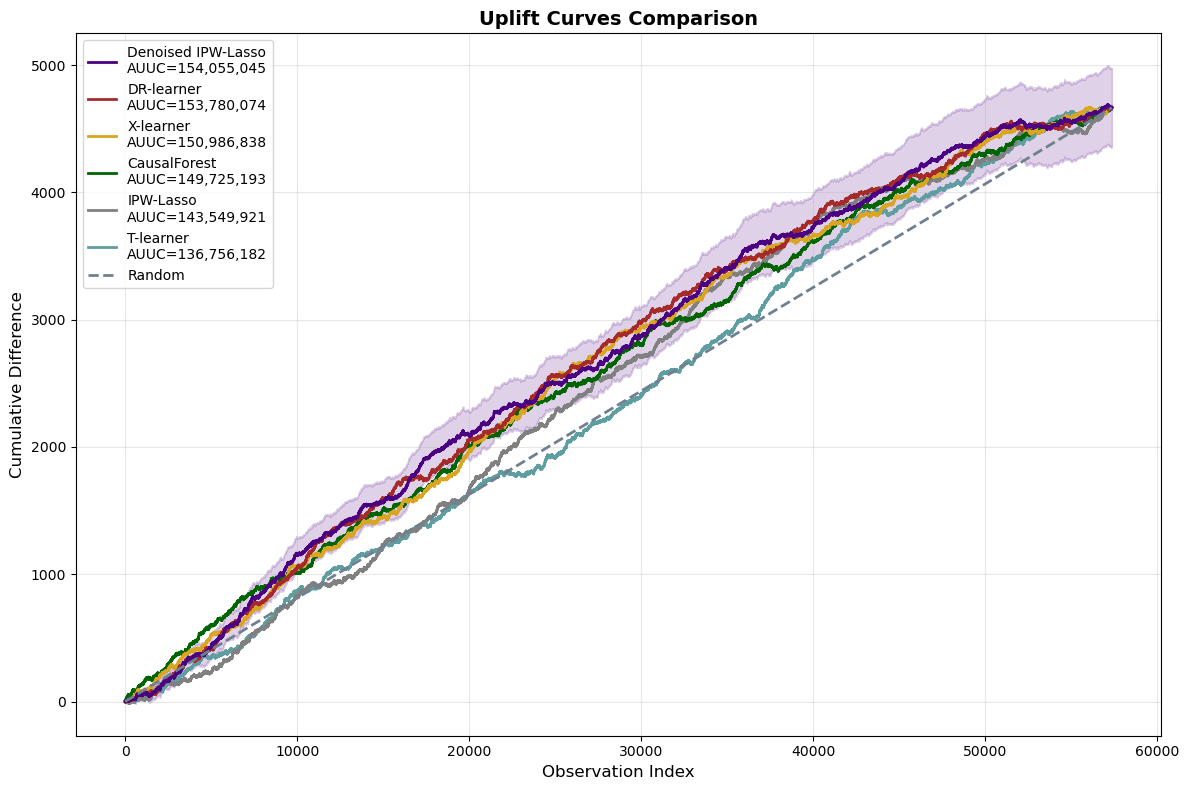}
    \caption{Uplift curve.}
    \label{fig:gotvuplift}
\end{figure}

Figure \ref{fig:gotvuplift} presents the uplift curves for all estimation methods. As shown, the DIPW-Lasso and DR-learner, both of which impose linearity on CATE, perform better than methods that do not impose any functional form.

Figure \ref{fig:gotvuplift} also presents the 95\% point-wise confidence band of the uplift curve for the DIPW-Lasso. This curve demonstrates a significant gain over random assignment, underscoring the utility of the DIPW-Lasso for decision-making. For example, suppose that the test set represents our target population, and consider a scenario in which only half of the units (28,680 units) can be treated due to limited resources. Based on the point estimate of the uplift curve, targeting units using the DIPW-Lasso is estimated to mobilize 2,742 units, compared to 2,333 under random assignment. This represents an approximate 17.5\% improvement in policy effect under the same budget.

In this example, denoising plays a significant role. The $R^2$, computed using the method described in Section 2.4, is 0.414. This suggests that a substantial portion of the IPW-transformed outcome consists of noise.
This is reflected in the difference in AUUC between DIPW-Lasso and IPW-Lasso (without denoising).

\subsubsection{Discover Interpretable Heterogeneity}
An additional benefit of employing the DIPW-Lasso is that it is effective for exploring the underlying sources of treatment effect heterogeneity by examining the estimated coefficients.

\begin{table}[htbp]
  \centering
  \begin{tabular}{l S[table-format=+1.8]}
    \toprule
    \textbf{Variable}   & \textbf{Coefficient} \\
    \midrule
    Intercept       & 0.0395 \\
    male            & -0.0058 \\
    age             & 0.0025 \\
    household size  & -0.0000 \\
    past\_voting\_1 & 0.0280 \\
    past\_voting\_2 & 0.0444 \\
    past\_voting\_3 & 0.0695 \\
    past\_voting\_4 & 0.0674 \\
    past\_voting\_5 & 0.0205 \\
    \bottomrule
  \end{tabular}
  \caption{Estimated coefficients.}
  \label{tab:GOTVcoef}
\end{table}

Table \ref{tab:GOTVcoef} lists the coefficient estimates obtained by the DIPW-Lasso.
All coefficients were estimated as nonzero, although that for household size is negligibly small.
The results are generally consistent with previous studies.
For instance, the estimate for age suggests that social pressure induces stronger treatment effects among individuals from older cohorts while having only a moderate impact on younger
individuals. This finding agrees with the prior literature, such as \textcite{Panagopoulos_et_al2014ElectoralStudies} and \textcite{fifield2019}, which attributed this pattern to life-cycle effects and age-related heterogeneity, respectively.

We also find clear heterogeneity in the treatment effect with respect to voting history.
Individuals who voted in three or four of the past five elections respond most positively to the treatment. In contrast, those who voted in all five previous elections exhibit only modest effects. The original study by \textcite{gerber2008social} did not reject the null hypothesis of equal treatment effects across different voting histories. However, other voter mobilization studies (e.g., \cite{niven2001limits,niven2004mobilization}) and subsequent meta-analytic findings from \textcite{arceneaux2009AJPS} have suggested that individuals' voting histories are an important source of treatment effect heterogeneity. Based on the literature, one possible explanation is that individuals who rarely or consistently voted are highly likely to maintain their habitual behavior in future elections, leaving limited room for mobilization. In contrast, those with intermediate voting histories can respond more to mobilization efforts. 

To assess whether the DIPW-Lasso appropriately captures systematic heterogeneity, we conducted an additional validation procedure. Specifically, we partitioned the test set into subgroups based on the values of the variables selected by the DIPW-Lasso. Then, we examined whether the subgroup ATE estimates  align with  the heterogeneity captured by the DIPW-Lasso. 

Figure \ref{fig: age_votingcount} presents the ATE estimates for subgroups defined by age and voting history. For the age variable, we divided the test set into four equally sized bins in increasing order of age. The subgroup effects exhibit a monotonic increase across the age bins, which is consistent with the positive coefficients for age, although the Lasso estimates its marginal effect. The ATE estimates  calculated for subgroups based on voting history are also broadly consistent with the coefficients estimated by the Lasso.

\begin{figure}[htbp]
    \centering
    \includegraphics[width=0.9\linewidth]{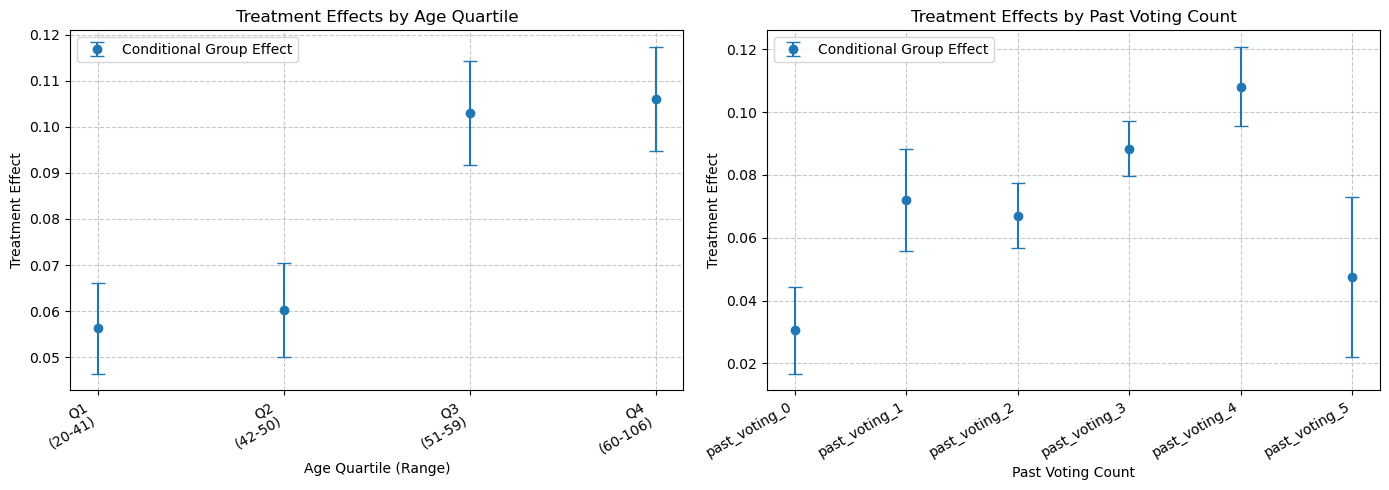}
    \caption{Group average treatment effects by age and past voting history.}
    \label{fig: age_votingcount}
\end{figure}

\subsection{Digital Advertising A/B Test}
In the second example, we use the Criteo Uplift Modeling dataset, a large-scale experimental dataset constructed from online advertising experiments \parencite{diemert2018large}. The dataset consists of anonymized units, and for each these, we observe whether they were exposed to an advertisement (treatment) and whether they visited the website (outcome).
Most of the units were randomly assigned to the treatment group, at a treatment probability of 0.85.

Our objective is to estimate the CATE of advertising on a binary outcome ``visit", which indicates whether a user visited the website. Because the dataset is huge (13,979,592 rows), we construct a manageable subset by drawing a stratified random sample of 100,000 rows, with stratification based on the joint distribution of the treatment indicator and the outcome. Subsequent analyses are performed on this sub-sample. 

In this application, we focus on comparing the ranking ability of different methods because the dataset anonymized individual features through random projection, thereby making the covariates no longer interpretable. We estimate the CATE using all 12 anonymized covariates provided by the dataset.

\begin{figure}[htbp]
    \centering
    \includegraphics[width=0.9\linewidth]{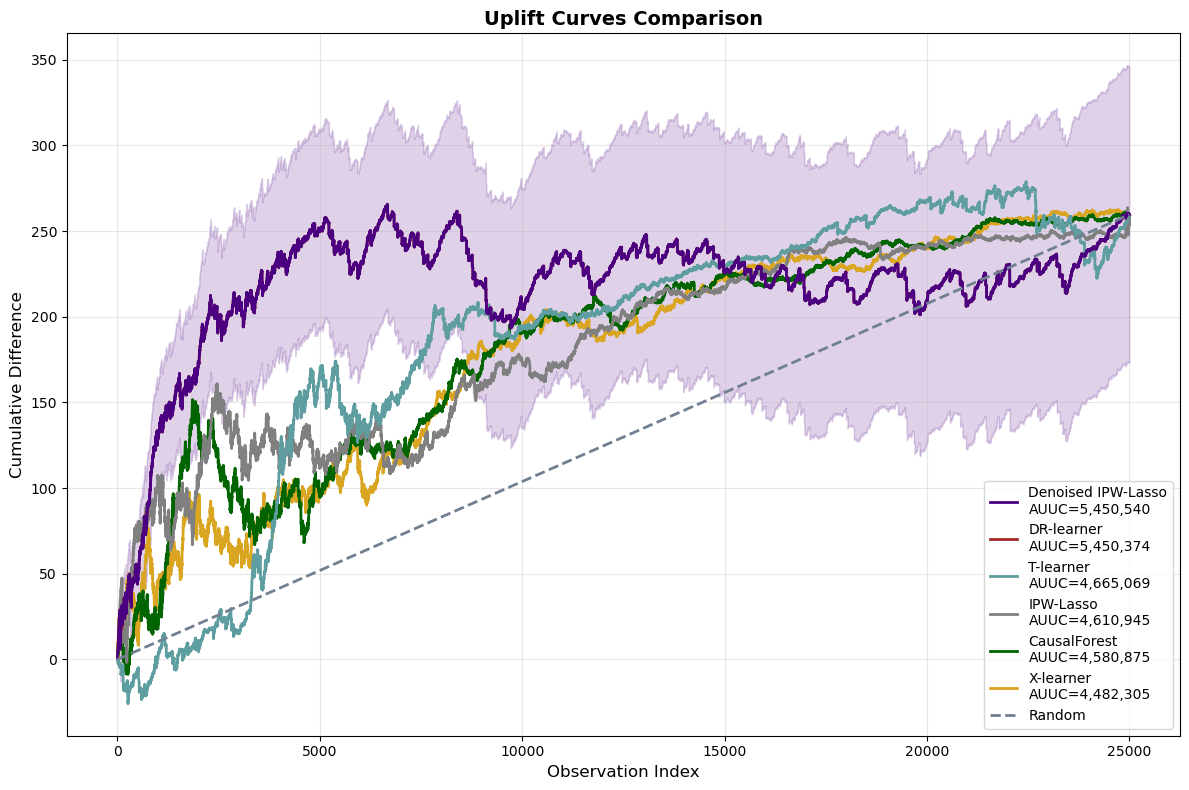}
    \caption{Uplift curve. }
    \label{fig:CriteoUpliftCurve}
\end{figure}

As shown in Figure~\ref{fig:CriteoUpliftCurve}, the DIPW-Lasso and DR-learner performed nearly equally well and achieved the highest AUUC.
This result also demonstrates the usefulness of assuming linearity for CATE.
The DIPW-Lasso now  selected only 2 out of the 12 covariates. The $R^2$ value from the denoising regression is 0.341.

Advertisers can leverage estimated CATE to optimize decision-making.
Treating all 25,000 units in the test set is estimated to yield an increase of 260 visits.
In contrast, targeting only the top 50\% ranked by the DIPW-Lasso results in an increase of 236 visits, achieving approximately 90.8\% of the total treatment effect at half the treatment cost.
Put differently, this targeting strategy yields an 81.5\% improvement over random assignment, which leads to only 130 additional visits when applied to 12,500 units.

\section{Conclusion}\label{sec:conclusion}
In the present paper, we propose  the DIPW-Lasso, an accurate and interpretable method for estimating heterogeneous treatment effects from RCT data, which has direct relevance to real-world decision-making. The method comprises two components: (i) a noise reduction technique that improves the efficiency of estimators based on IPW transformation and (ii) Lasso regularization to obtain a sparse linear CATE model. 
Our results provide the first theoretical guarantee that denoising reduces the prediction error in CATE estimation.

We assessed the effectiveness of our method through empirical analyses on two widely used datasets, the GOTV dataset and the Criteo Uplift Modeling dataset. Empirical studies demonstrated that the DIPW-Lasso substantially improves upon its non-denoised counterpart and achieves higher targeting accuracy than recent machine learning alternatives while maintaining a simple model structure, demonstrating its strong practical relevance.

Despite the above advantages, an important extension remains to be constructed . Although the DIPW-Lasso provides meaningful insights into treatment effect heterogeneity, the use of the Lasso renders conventional statistical inference infeasible. As this inferential limitation may restrict the translation of the data-driven findings into verifiable scientific conclusions, addressing inference in this context is a valuable direction for future research.

\section*{Acknowledgements}
We thank Ryosuke Hyodo, Hiro Kasahara, Tatsushi Oka, Takahide Yanagi, Kaiji Motegi, Akio Namba, and Mototsugu Shintani for their valuable comments.
We also thank participants at the Kansai Econometrics Meeting 2025, Japanese Joint Statistical Meeting 2024, Summer Workshop on Economic Theory, and 19th Japanese Statistical Society Spring Meeting for helpful comments and suggestions.

\section*{Funding}
Mingqian Guan is supported by JST SPRING Grant Number JPMJSP2148.
Naoya Sueishi is supported by JSPS KAKENHI Grant Number 24K04819.

\section*{Conflict of Interest}
Naoya Sueishi has a consulting agreement with CyberAgent, Inc., and Mingqian Guan previously interned at the same company. Shota Yasui and Komei Fujita are currently employed by CyberAgent, Inc. The authors declare no other conflicts of interest.

{\normalem
\printbibliography
}

\appendix
\section{Appendix}

\subsection{Proof of Proposition \ref{proposition:optimalB}}
\begin{proof}
Expanding $E\left[(Y_i W_i - B(X_i) W_i)^2 \mid X_i = x\right]$ gives
\begin{align}\label{obj}
    &E\left[(Y_i W_i - B(X_i) W_i)^2 \mid X_i = x\right] \notag \\
    &\quad= E\left[Y_i^2 W_i^2 \mid X_i = x\right] - 2 B(x)\,E\left[Y_i W_i^2 \mid X_i = x\right] 
    + B^2(x)\,E\left[W_i^2 \mid X_i = x\right].
\end{align}
Additionally, conditional expectations are calculated as follows:
\begin{align*}
E[Y_i W_i^2 \mid X_i=x] &= \frac{\mu_1(x)}{p(x)} + \frac{\mu_0(x)}{1-p(x)}, \\
E[W_i^2 \mid X_i=x] &= \frac{1}{p(x)(1-p(x))}.
\end{align*}
Therefore, for each $x$, the value of $B(x)$ that minimizes \eqref{obj} is given by
\begin{equation*}
    B(x) = (1 - p(x)) \mu_1(x) + p(x) \mu_0(x),
\end{equation*}
completing the proof.
\end{proof}

\subsection{Proof of Proposition \ref{proposition: basic_lassobound}}\label{proof:basciBound}
\begin{proof}
We first prove the concentration inequality \eqref{eq:noise_concentration}. By applying the Nemirovski moment inequality (see Lemma 14.24 in \cite{buhlmann2015high}) and using Assumptions \ref{assum:BoundednessofX} and \ref{assum:finiteL4moment}, we obtain
\begin{equation*}
E\left[\max_{1 \leq j \leq p}\left|\frac{e'X^{j}}{n}\right|^2\right]
\le 8 C^2 \sigma_e^2 \frac{\log 2p}{n}.
\end{equation*}
Furthermore, by the Markov inequality, we have
\begin{equation*}
P\left(\max_{1 \le j \leq p}\left|\frac{e'X^j}{n}\right|>c\right)
\leq \frac{E\left[\max_{1 \le j \leq p}\left|\frac{e' X^j}{n}\right|\right]}{c}
\leq \frac{\sqrt{E\left[\max_{1 \leq j \leq p} \left|\frac{e'X^j}{n}\right|^2\right]}}{c}
\leq \frac{\sqrt{8}C\sigma_e\sqrt{\log2p/n}}{c}
\end{equation*}
for any $c>0$.
Therefore, choosing $c = \sqrt{8} C \sigma_e\sqrt{\log 2p/n}/\eta$ for $\eta \in (0,1)$, we obtain
\begin{equation*}\label{eq:noise_concentrationProof}
    P\left( \left\| \frac{1}{n} X'e \right\|_{\infty} 
    \leq \frac{\sqrt{8} C \sigma_e \sqrt{\frac{\log 2p}{n}}}{\eta} \right) 
    \geq 1 - \eta.
\end{equation*}
Finally, let $\lambda = \frac{\sqrt{8} C \sigma_e \sqrt{\log 2p/n}}{\eta}$. 
Then, using \eqref{eq:BasicInequality}, we obtain
\begin{equation*}
    \frac{1}{n} \| X \tilde{\beta} - X \beta \|_2^2 \leq 4\lambda \| \beta \|_1
\end{equation*}
with probability at least \(1 -\eta\).
\end{proof}

To prove Theorem \ref{main_theorem}, we will need the following lemmas.

\begin{lemma}\label{lemma1} Suppose that Assumptions \ref{assum:overlap} and \ref{assum:BoundednessofX}--\ref{assum:NonSingularity} hold. Then, we have
    \begin{equation*}
        \left\| \frac{1}{n} X^{\prime} P_{\hat{Z}} Y^* \right\|_{\infty} = o_p\left( \sqrt{\frac{\log p}{n}} \right).
    \end{equation*}
\end{lemma}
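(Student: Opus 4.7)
The plan is to unpack the rank-two projection and split the target into three separately controllable factors. Writing
\[
\frac{1}{n} X'P_{\hat Z} Y^* = \left(\frac{1}{n} X' \hat Z\right) \left(\frac{1}{n} \hat Z' \hat Z\right)^{-1} \left(\frac{1}{n} \hat Z' Y^*\right),
\]
a direct matrix-norm inequality gives
\[
\left\| \tfrac{1}{n} X' P_{\hat Z} Y^* \right\|_\infty \le \sqrt{2}\, \max_{j,k}\left|\tfrac{1}{n}(X' \hat Z)_{jk}\right| \cdot \left\|\left(\tfrac{1}{n} \hat Z' \hat Z\right)^{-1}\right\|_{\mathrm{op}} \cdot \left\|\tfrac{1}{n} \hat Z' Y^*\right\|_2,
\]
and it suffices to show (i) the middle factor is $O_p(1)$, (ii) the rightmost factor is $o_p(1)$, and (iii) the leftmost factor is $O_p(\sqrt{\log(2p)/n})$.

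For (i), Assumption~\ref{assum:overlap} gives $|W_i|\le M := 1/(\xi(1-\xi))$, so Assumption~\ref{assum:l2convergence} combined with the law of large numbers delivers $\hat Z'\hat Z/n \to E[Z_i Z_i']$ in probability, and Assumption~\ref{assum:NonSingularity} makes the limit invertible. For (ii), decompose $\hat Z' Y^*/n = Z' Y^*/n + (\hat Z - Z)' Y^*/n$. The first term has population mean $E[Z_i Y_i^*] = 0$ by the orthogonality defining $\gamma$, so is $O_p(n^{-1/2})$ by Chebyshev, where the required finite second moment of $Y_i^*$ follows from Assumption~\ref{assum:finiteL4moment} and the boundedness of $W_i$ and of the entries of $X_i$. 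The second term has only one nonzero coordinate, $(1/n)\sum_i (\hat B_{I_{k(i)}^c}-B)(X_i)W_i Y_i^*$, which is $o_p(1)$ by Cauchy--Schwarz and Assumption~\ref{assum:l2convergence}.

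The main obstacle is (iii). The first column of $X'\hat Z/n$ and the population part $(1/n)\sum_i X_{ij}B(X_i)W_i$ of the second column both have mean zero because $E[W_i\mid X_i]=0$ under Assumption~\ref{assum:overlap}, so Nemirovski's moment inequality (as in the proof of Proposition~\ref{proposition: basic_lassobound}) yields $O_p(\sqrt{\log(2p)/n})$ for their elementwise maxima. The hard piece is the estimation-error term $(1/n)\sum_i X_{ij}(\hat B_{I_{k(i)}^c}-B)(X_i)W_i$: a naive Cauchy--Schwarz bound is only $o_p(1)$, too slow by a factor of $\sqrt{\log(2p)/n}$. The resolution is to exploit $E[W_i\mid X_i]=0$ once more together with cross-fitting. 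Conditioning on $\hat B_{I_k^c}$, the summands $X_i(\hat B_{I_k^c}-B)(X_i)W_i$ for $i\in I_k$ are independent with conditional mean zero, so the conditional Nemirovski inequality gives
\[
E\!\left[\max_{1\le j\le p}\left(\tfrac{1}{n}\sum_{i\in I_k} X_{ij}(\hat B_{I_k^c}-B)(X_i)W_i\right)^{\!2} \,\middle|\, \hat B_{I_k^c}\right] \le \frac{8C^2M^2\log(2p)}{n}\,\|\hat B_{I_k^c}-B\|_{L_2(F)}^2 .
\]
Taking unconditional expectations, summing over the $K$ folds, invoking Assumption~\ref{assum:l2convergence}, and applying Markov's inequality yields the required $o_p(\sqrt{\log(2p)/n})$ bound. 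Combining (i)--(iii) delivers the lemma.
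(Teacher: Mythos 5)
Your proposal is correct and follows essentially the same route as the paper's proof: the same three-factor decomposition of $n^{-1}X'P_{\hat Z}Y^*$, Nemirovski's moment inequality (applied conditionally on the auxiliary fold to handle the cross-fitted estimation error in $X'\hat Z/n$), the law of large numbers plus Assumption~\ref{assum:NonSingularity} for the inverse Gram factor, and the linear-projection orthogonality $E[Z_iY_i^*]=0$ for the last factor. The only differences are cosmetic (explicit constants such as $\sqrt{2}$ and $M$, and spelling out the Chebyshev step for $\hat Z'Y^*/n$ that the paper leaves as ``similar arguments'').
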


\begin{proof}
We begin by decomposing the term as follows:
\begin{equation*}
\left\|\frac{1}{n} X^{\prime} P_{\hat{Z}} Y^*\right\|_{\infty} 
\leq 
\left\|\frac{1}{n} X^{\prime} \hat{Z}\right\|_{\infty} 
\left\|\left( \frac{1}{n} \hat{Z}^{\prime} \hat{Z} \right)^{-1}\right\|_2
\left\|\frac{1}{n} \hat{Z}^{\prime} Y^*\right\|_2
\equiv I_1 I_2 I_3.
\end{equation*}
The desired result follows by showing that $I_1 =O_p(\sqrt{\log p/n})$, $I_2=O_p(1)$, and $I_3 =o_p(1)$.

Next, decomposing $I_1$, we obtain
\begin{align*}
\left\|\frac{1}{n} X^{\prime} \hat{Z}\right\|_{\infty} 
&\leq \left\|\frac{1}{n} X^{\prime} Z\right\|_{\infty} + \left\|\frac{1}{n} X^{\prime} (\hat{Z} - Z)\right\|_{\infty} \notag \\
&\leq \left\|\frac{1}{n} X^{\prime} Z\right\|_{\infty} 
+ \frac{1}{K} \sum_{k=1}^{K}  \max_{1 \leq j \leq p}   \left| \frac{K}{n} \sum_{i \in I_k} X_{ij} W_i (\widehat{B}_{I^c_k}(X_i)-B(X_i)) \right|,
\end{align*}
where the last inequality follows from the fact that the all elements in the first column of $\hat{Z} - Z$ are zero.
Moreover, we have
\begin{equation*}
\left\| \frac{1}{n} X^{\prime} Z \right\|_{\infty} 
\le
\max_{1 \leq j \leq p} \left| \frac{1}{n} \sum_{i=1}^n X_{ij} W_i \right| +
\max_{1 \leq j \leq p} \left| \frac{1}{n} \sum_{i=1}^n X_{ij} W_i B(X_i) \right|.
\end{equation*}
Since $E[X_{ij}W_i B(X_i)]=0$ for $j=1, \dots, p$ and any function $B(\cdot)$, applying Nemirovski's inequality and Assumption \ref{assum:BoundednessofX} yields
\begin{align*}
E\left[\max_{\substack{1 \leq j \leq p}} \left| \frac{1}{n} \sum_{i=1}^n X_{ij} W_{i}B(X_i) \right|^2\right] & \leq \frac{8\log(2p)}{n} C^2  E\left[ W_i^2 B^2(X_i) \right].
\end{align*}
Note that the above inequality also holds for $B(X_i)=1$.
Thus, the Markov inequality implies $\| n^{-1} X'Z\|_\infty=O_p(\sqrt{\log p/n})$.
Furthermore, for $i \in I_k$, we have
\[
 E \left[ X_{ij} W_i \left( \widehat{B}_{I^c_k}(X_i) - B(X_i) \right) \mid I_k^c\right] =0.
\]
Since the random variables $\{X_{ij} W_i ( \widehat{B}_{I^c_k}(X_i) - B(X_i))\}_{i \in I_k}$ are independent conditional on $I_k^c$, we can apply Nemirovski's inequality again to obtain
\begin{align*}
& E \left[ \max_{1 \leq j \leq p} \left| \frac{K}{n} \sum_{i \in I_k} X_{ij} W_i \left( \widehat{B}_{I^c_k}(X_i) - B(X_i) \right) \right|^2 \mid I_k^c \right] \\
& \quad \leq \frac{8 K \log(2p)}{n} C^2 E\left[ W_i^2 \left( \widehat{B}_{I^c_k}(X_i) - B(X_i) \right)^2 \mid I_k^c \right].
\end{align*}
Moreover, from  Assumption \ref{assum:overlap}, we have $|W_i|< 1/\xi < \infty$.
Thus, we obtain
\begin{align*}
&E \left[ \frac{1}{K} \sum_{k=1}^{K}  \max_{1 \leq j \leq p}   \left| \frac{K}{n} \sum_{i \in I_k} X_{ij} W_i (\widehat{B}_{I^c_k}(X_i)-B(X_i)) \right| \right] \\
& \quad \le \frac{8 K \log(2p)}{n} C^2 \xi^{-2} E\left[ \left( \widehat{B}_{I^c_k}(X_i) - B(X_i) \right)^2\right],
\end{align*}
which implies
\begin{align*}
   \frac{1}{K} \sum_{k=1}^K \max_{1 \leq j \leq p} \left| \frac{K}{n} \sum_{i \in I_k} X_{ij} W_i (\widehat{B}_{I^c_k}(X_i)-B(X_i)) \right| &= o_p\left( \sqrt{ \frac{ \log p }{n} } \right)
\end{align*}
by Assumption \ref{assum:l2convergence}.
Therefore, we obtain $I_1 =O_p(\sqrt{\log p/n})$.

Next, $n^{-1} \hat{Z}'\hat{Z}$ can be written as
\begin{align*}
\frac{1}{n} \hat{Z}'\hat{Z} 
= \begin{pmatrix}
n^{-1} \sum_{i=1}^n W_i^2 & n^{-1} \sum_{k=1}^K \sum_{i \in I_k} W_i^2 \widehat{B}_{I^c_k}(X_i) \\ 
n^{-1} \sum_{k=1}^K \sum_{i \in I_k} W_i^2 \widehat{B}_{I^c_k}(X_i) & n^{-1} \sum_{k=1}^K \sum_{i \in I_k} W_i^2 \widehat{B}^2_{I^c_k}(X_i)
\end{pmatrix}.
\end{align*}
It follows from the Markov inequality and Assumptions \ref{assum:overlap} and \ref{assum:l2convergence} that
\begin{align*}
P\left( \left| \frac{1}{n} \sum_{k=1}^K \sum_{i \in I_k} W_i^2 \left( \widehat{B}_{I^c_k}(X_i) - B(X_i) \right) \right| > \epsilon \right)  
& \leq  \frac{ E \left[ \left|\widehat{B}_{I_k^c}(X_i)-B(X_i) \right| \right]}{\xi^2 \epsilon} \to 0
\end{align*}
for any $\epsilon >0$. Hence, we have
\begin{align*}
\frac{1}{n} \sum_{k=1}^K \sum_{i \in I_k} W_i^2 \widehat{B}_{I^c_k}(X_i) 
= \frac{1}{n} \sum_{i=1}^n W_i^2 B(X_i) + o_p(1).
\end{align*}
Similarly, we obtain
\[
P \left( \left|\frac{1}{n} \sum_{k=1}^K \sum_{i \in I_k} W_i^2 \left( \widehat{B}_{I^c_k}(X_i) - B(X_i) \right)^2 \right|> \epsilon \right)
\leq 
\frac{E\left[\left(\widehat{B}_{I_k^c}(X_i)-B(X_i) \right)^2 \right]}{\xi^2 \epsilon} \to 0.
\]
Moreover, by the Cauchy-Schwarz inequality, we have 
\begin{align*}
&\left|\frac{1}{n} \sum_{k=1}^K \sum_{i \in I_k} W_i^2 B(X_i) \left( \widehat{B}_{I^c_k}(X_i) - B(X_i) \right) \right| \\
& \quad \le \xi^{-2} \sqrt{\frac{1}{n} \sum_{i=1}^n B^2
(X_i)} \sqrt{\frac{1}{n} \sum_{k=1}^K \sum_{i \in I_k} \left(\widehat{B}_{I_k^c}(X_i)-B(X_i) \right)^2}= o_p(1).
\end{align*}
Thus, we have
\begin{align*}
\frac{1}{n} \sum_{k=1}^K \sum_{i \in I_k} W_i^2 \widehat{B}^2_{I^c_k}(X_i) 
&= \frac{1}{n} \sum_{i=1}^n W_i^2 B^2(X_i)
+ \frac{1}{n} \sum_{k=1}^n \sum_{i \in I_k} W_i^2 \left( \widehat{B}_{I^c_k}(X_i) - B(X_i) \right)^2 \\
& \quad + \frac{2}{n} \sum_{k=1}^n \sum_{i \in I_k} W_i^2 B(X_i) \left( \widehat{B}_{I^c_k}(X_i) - B(X_i) \right) \\
& = \frac{1}{n} \sum_{i \in I_k} W_i^2 B^2(X_i) +o_p(1).
\end{align*}
Therefore, by the law of large numbers, we have
$n^{-1}\hat{Z}_k^{\prime} \hat{Z}_k = E[Z_i Z_i^{\prime}] + o_p(1)$, which implies that $I_2=O_p(1)$.

Finally, $n^{-1} \hat{Z}' Y^*$ can be written as 
\begin{equation*}
n^{-1} \hat{Z}' Y^* = 
\begin{pmatrix}
n^{-1} \sum_{i=1}^n W_i Y^*_i \\[0.5em]
n^{-1} \sum_{k=1}^K \sum_{i \in I_k} W_i \widehat{B}_{I^c_k}(X_i) Y^*_i 
\end{pmatrix}.
\end{equation*}
Because $E[W_i Y_i^*]=E[W_i B(X_i) Y_i^*]=0$ from the definition of the linear projection, using similar arguments as those above, we obtain $I_3=o_p(1)$. 
\end{proof}

\begin{lemma}\label{lemma2} Suppose that Assumptions \ref{assum:overlap} and \ref{assum:BoundednessofX}--\ref{assum:NonSingularity} hold. Then, we have
    \begin{equation*}
        \left\| \frac{1}{n} X^{\prime} \left(Z - P_{\hat{Z}} Z\right) \gamma \right\|_{\infty} = o_p\left( \sqrt{\frac{\log p}{n}} \right).
    \end{equation*}
\end{lemma}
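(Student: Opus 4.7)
The key observation is that $Z$ and $\hat{Z}$ differ only in their second column, so the matrix $Z-\hat{Z}$ has its first column identically zero. Since $P_{\hat{Z}}\hat{Z} = \hat{Z}$, we have $(I-P_{\hat{Z}})Z = (I-P_{\hat{Z}})(Z-\hat{Z})$. Writing $d = (d_1,\dots,d_n)'$ with $d_i = W_i(B(X_i)-\hat{B}_{I_k^c}(X_i))$ for $i \in I_k$, the vector $(Z-\hat{Z})\gamma$ reduces to $\alpha_2\, d$, and consequently
\[
\frac{1}{n}X'(Z-P_{\hat{Z}}Z)\gamma = \alpha_2\left(\frac{1}{n}X'd \;-\; \frac{1}{n}X'P_{\hat{Z}}d\right).
\]
The plan is to bound each of the two terms on the right by $o_p(\sqrt{\log p/n})$, reusing the techniques from the proof of Lemma~\ref{lemma1} almost verbatim.

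For the first term, the proof of Lemma~\ref{lemma1} already established
\[
\left\|\frac{1}{n}X'd\right\|_\infty \le \frac{1}{K}\sum_{k=1}^{K}\max_{1\le j\le p}\left|\frac{K}{n}\sum_{i\in I_k}X_{ij}W_i\big(\hat{B}_{I_k^c}(X_i)-B(X_i)\big)\right| = o_p\!\left(\sqrt{\log p/n}\right),
\]
via Nemirovski's inequality conditional on $I_k^c$ combined with Assumptions~\ref{assum:overlap}, \ref{assum:BoundednessofX}, and \ref{assum:l2convergence}. For the second term, I would apply the same factorization as in the proof of Lemma~\ref{lemma1}:
\[
\left\|\frac{1}{n}X'P_{\hat{Z}}d\right\|_\infty \le \left\|\frac{1}{n}X'\hat{Z}\right\|_\infty \cdot \left\|\left(\frac{1}{n}\hat{Z}'\hat{Z}\right)^{-1}\right\|_2 \cdot \left\|\frac{1}{n}\hat{Z}'d\right\|_2.
\]
The first two factors are $O_p(\sqrt{\log p/n})$ and $O_p(1)$, respectively, by Lemma~\ref{lemma1}, so it remains to show $\|\hat{Z}'d/n\|_2 = o_p(1)$. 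The two coordinates of $\hat{Z}'d/n$ are
\[
\frac{1}{n}\sum_{k=1}^{K}\sum_{i\in I_k}W_i^2\big(B(X_i)-\hat{B}_{I_k^c}(X_i)\big) \quad\text{and}\quad \frac{1}{n}\sum_{k=1}^{K}\sum_{i\in I_k}W_i^2\,\hat{B}_{I_k^c}(X_i)\big(B(X_i)-\hat{B}_{I_k^c}(X_i)\big),
\]
both of which are $o_p(1)$: the first by Markov's inequality, $|W_i|\le 1/\xi$, and Assumption~\ref{assum:l2convergence}; the second by a Cauchy--Schwarz step identical to the one used in the proof of Lemma~\ref{lemma1} when handling $n^{-1}\sum_{k}\sum_{i\in I_k} W_i^2\hat{B}^2_{I_k^c}(X_i)$.

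Rather than a genuine obstacle, the only point requiring care is the cross-fitting bookkeeping: one must explicitly condition on $I_k^c$ inside each fold so that the summands are independent and Nemirovski's and Markov's inequalities apply, and then sum the resulting bounds over the finitely many folds. Once this structural device is in place, every quantitative step here has already been carried out in the proof of Lemma~\ref{lemma1}, so no new concentration argument or rate assumption on $\hat{B}_{I_k^c}$ beyond Assumption~\ref{assum:l2convergence} is required.
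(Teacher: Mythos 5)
Your proposal is correct and follows essentially the same route as the paper: the identity $(I-P_{\hat Z})Z=(I-P_{\hat Z})(Z-\hat Z)$ yields the same two terms the paper bounds, the first handled by the cross-fitted Nemirovski bound already established in Lemma~\ref{lemma1}, and the second by the factorization $\|n^{-1}X'\hat Z\|_\infty\,\|(n^{-1}\hat Z'\hat Z)^{-1}\|_2\,\|n^{-1}\hat Z'(Z-\hat Z)\gamma\|_2$ with the last factor shown to be $o_p(1)$ exactly as in the paper (Markov plus the triangle/Cauchy--Schwarz split for the $\hat B_{I_k^c}(B-\hat B_{I_k^c})$ term). No gaps.
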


\begin{proof}
We bound $\left\| n^{-1} X^{\prime} \left(Z - P_{\hat{Z}} Z\right) \gamma \right\|_{\infty}$ as 
\begin{align*}
&\left\| \frac{1}{n} X^\prime (Z - P_{\hat{Z}} Z) \gamma \right\|_{\infty} \\
&\leq \left\| \frac{1}{n} X^\prime Z \gamma 
- \frac{1}{n} X' P_{\hat{Z}}\hat{Z} \gamma  \right\|_{\infty} 
+ \left\| \frac{1}{n} X^\prime P_{\hat{Z}} (Z - \hat{Z}) \gamma \right\|_{\infty} \\
&= \left\| \frac{1}{n} X^\prime (Z - \hat{Z}) \gamma \right\|_{\infty} 
+ \left\| \frac{1}{n} X^\prime \hat{Z} \right\|_{\infty} 
\left\| \left( \frac{1}{n} \hat{Z}^\prime \hat{Z} \right)^{-1} \right\|_2 
\left\| \frac{1}{n} \hat{Z}^\prime (Z - \hat{Z}) \gamma \right\|_2.
\end{align*}
Here, we have
\begin{equation*}
\left\|\frac{1}{n} \hat{Z}^\prime (Z-\hat{Z}) \gamma \right\|_2 = \left\|\begin{array}{l}
n^{-1} \sum_{k=1}^K \sum_{i\in I_k} W_i^2\left(B\left(X_i\right)-\widehat{B}_{I^c_k}\left(X_i\right)\right) \alpha_2 \\
n^{-1} \sum_{k=1}^K \sum_{i\in I_k} W_i^2 \widehat{B}_{I^c_k}\left(X_i\right)\left(B\left(X_i\right)-\widehat{B}_{I^c_k}\left(X_i\right)\right) \alpha_2
\end{array}\right\|_2.
\end{equation*}
Moreover, we have
\begin{align*}
&\left| \frac{1}{n} \sum_{k=1}^K \sum_{i \in I_k} W_i^2 \widehat{B}_{I^c_k}(X_i) \left( B(X_i) - \widehat{B}_{I^c_k}(X_i) \right) \right| \\
&\quad  \le \left| \frac{1}{n} \sum_{k=1}^K \sum_{i \in I_k} W_i^2 \left( \widehat{B}_{I^c_k}(X_i)-B(X_i) \right)^2  \right|
+ \left|
\frac{1}{n} \sum_{k=1}^K \sum_{i \in I_k}  W_i^2 B(X_i) \left(  \widehat{B}_{I^c_k}(X_i)-B(X_i)\right) \ \right|
\end{align*}
Thus, by the proof of Lemma \ref{lemma1}, we have $\| n^{-1} \hat{Z}^\prime (Z - \hat{Z}) \gamma \|_2 = o_p(1)$, which also implies the desired result.
\end{proof}

\subsection{Proof of Theorem \ref{main_theorem}}

\begin{proof}
By a similar argument as in the proof of Proposition \ref{proposition: basic_lassobound}, we can obtain
\begin{equation*}
    P\left( \left\| \frac{1}{n} X'u \right\|_{\infty} 
    \leq \frac{\sqrt{8} C \sigma_u \sqrt{\frac{\log 2p}{n}}}{\eta} \right) 
    \geq 1 - \eta
\end{equation*}
for $\eta \in (0,1)$.
Therefore, by applying the triangular inequality, we obtain
\begin{align*}
&P\left( \left\| \frac{1}{n} X' \left( u + \left(Z - P_{\hat{Z}} Z\right) \gamma - P_{\hat{Z}} Y^* \right) \right\|_{\infty} 
\geq \left( \frac{\sqrt{8} C \sigma_u}{\eta} + \epsilon \right) \sqrt{\frac{\log 2p}{n}} \right) \notag \\
&\quad \leq P\left( \left\| \frac{1}{n} X'u \right\|_{\infty} 
\geq \frac{\sqrt{8} C \sigma_u}{\eta} \sqrt{\frac{\log 2p}{n}} \right) \notag \\
&\quad \quad + P\left( \left\|\frac{1}{n} X^{\prime} \left(Z - P_{\hat{Z}} Z \right) \gamma + \frac{1}{n} X^{\prime} P_{\hat{Z}} Y^*  \right\|_{\infty} 
\geq \epsilon \sqrt{\frac{\log 2p}{n}} \right) \notag \\
&\quad \leq \eta + P\left( \left\|\frac{1}{n} X^{\prime} \left(Z - P_{\hat{Z}} Z \right) \gamma + \frac{1}{n} X^{\prime} P_{\hat{Z}} Y^*  \right\|_{\infty} 
\geq \epsilon \sqrt{\frac{\log 2p}{n}} \right)
\end{align*}
for any $\epsilon >0$.
The desired result follows from Lemmas \ref{lemma1} and \ref{lemma2} and \eqref{eq:BasicInequalityofDIL}.
\end{proof}

\end{document}